\begin{document}
	
\newtheorem{mytheorem}{Theorem}
\newtheorem{mydef}{Definition}

\algrenewcommand\algorithmicindent{0.5em}
\algnewcommand\algorithmicswitch{\textbf{switch}}
\algnewcommand\algorithmiccase{\textbf{case}}
\algnewcommand\algorithmicassert{\texttt{assert}}
\algnewcommand\Assert[1]{\State \algorithmicassert(#1)}
% New "environments"
\algdef{SE}[SWITCH]{Switch}{EndSwitch}[1]{\algorithmicswitch\ #1\ \algorithmicdo}{\algorithmicend\ \algorithmicswitch}
\algdef{SE}[CASE]{Case}{EndCase}[1]{\algorithmiccase\ #1}{\algorithmicend\ \algorithmiccase}

\algtext*{EndSwitch}
\algtext*{EndCase}
\algtext*{EndWhile}% Remove "end while" text
\algtext*{EndIf}% Remove "end if" text
\algtext*{EndFor}% Remove "end for" text
\algtext*{EndFunction}% Remove "end function" text

\newif\ifboldnumber
\newcommand{\boldnext}{\global\boldnumbertrue}

\title{Context-Free Path Querying by Matrix Multiplication}

\author{Rustam Azimov}
\affiliation{%
  \institution{Saint Petersburg State University}
  \streetaddress{7/9 Universitetskaya nab.}
  \city{St. Petersburg} 
  \state{Russia} 
  \postcode{199034}
}
\email{rustam.azimov19021995@gmail.com}

\author{Semyon Grigorev}
\affiliation{%
	\institution{Saint Petersburg State University}
	\streetaddress{7/9 Universitetskaya nab.}
	\city{St. Petersburg}
	\state{Russia}  
	\postcode{199034}
}
\email{Semen.Grigorev@jetbrains.com}

% The default list of authors is too long for headers}
% \renewcommand{\shortauthors}{B. Trovato et al.}
\renewcommand{\shortauthors}{}

\begin{abstract}
Graph data models are widely used in many areas, for example, bioinformatics, graph databases. In these areas, it is often required to process queries for large graphs. Some of the most common graph queries are navigational queries. The result of query evaluation is a set of implicit relations between nodes of the graph, i.e. paths in the graph. A natural way to specify these relations is by specifying paths using formal grammars over the alphabet of edge labels. An answer to a context-free path query in this approach is usually a set of triples $(A, m, n)$ such that there is a path from the node $m$ to the node $n$, whose labeling is derived from a non-terminal $A$ of the given context-free grammar. This type of queries is evaluated using the \textit{relational query semantics}. Another example of path query semantics is the \textit{single-path query semantics} which requires presenting a single path from the node $m$ to the node $n$, whose labeling is derived from a non-terminal $A$ for all triples $(A, m, n)$ evaluated using the relational query semantics. There is a number of algorithms for query evaluation which use these semantics but all of them perform poorly on large graphs. One of the most common technique for efficient big data processing is the use of a graphics processing unit (GPU) to perform computations, but these algorithms do not allow to use this technique efficiently. In this paper, we show how the context-free path query evaluation using these query semantics can be reduced to the calculation of the matrix transitive closure. Also, we propose an algorithm for context-free path query evaluation which uses relational query semantics and is based on matrix operations that make it possible to speed up computations by using a GPU.
\end{abstract}

\keywords{Transitive closure, CFPQ, graph databases, context-free grammar, GPGPU, matrix multiplication}

\maketitle

\section{Introduction}
Graph data models are widely used in many areas, for example, bioinformatics~\cite{Bio}, graph databases~\cite{graphDB}. In these areas, it is often required to process queries for large graphs. The most common among graph queries are navigational queries. The result of query evaluation is a set of implicit relations between nodes of the graph, i.e. paths in the graph. A natural way to specify these relations is by specifying paths using formal grammars (regular expressions, context-free grammars) over the alphabet of edge labels. Context-free grammars are actively used in graphs queries because of the limited expressive power of regular expressions. 

The result of context-free path query evaluation is usually a set of triples $(A, m, n)$ such that there is a path from the node $m$ to the node $n$, whose labeling is derived from a non-terminal $A$ of the given context-free grammar. This type of query is evaluated using the \textit{relational query semantics}~\cite{hellingsRelational}. Another example of path query semantics is the \textit{single-path query semantics}~\cite{hellingsPathQuerying} which requires presenting a single path from the node $m$ to the node $n$ whose labeling is derived from a non-terminal $A$ for all triples $(A, m, n)$ evaluated using the relational query semantics. There is a number of algorithms for context-free path query evaluation using these semantics~\cite{GLL, hellingsRelational, RDF, GraphQueryWithEarley}.

Existing algorithms for context-free path query evaluation w.r.t. these semantics demonstrate poor performance when applied to big data. One of the most common technique for efficient big data processing is \textit{GPGPU} (General-Purpose computing on Graphics Processing Units), but these algorithms do not allow to use this technique efficiently. The algorithms for context-free language recognition had a similar problem until Valiant~\cite{valiant} proposed a parsing algorithm which computes a recognition table by computing matrix transitive closure. Thus, the active use of matrix operations (such as matrix multiplication) in the process of a transitive closure computation makes it possible to efficiently apply GPGPU computing techniques~\cite{matricesOnGPGPU}.

We address the problem of creating an algorithm for context-free path query evaluation using the relational and the single-path query semantics which allows us to speed up computations with GPGPU by using the matrix operations.

The main contribution of this paper can be summarized as follows:
\begin{itemize}
	\item We show how the context-free path query evaluation w.r.t. the relational and the single-path query semantics can be reduced to the calculation of matrix transitive closure.
	\item We introduce an algorithm for context-free path query evaluation w.r.t. the relational query semantics which is based on matrix operations that make it possible to speed up computations by means of GPGPU.
	\item We provide a formal proof of correctness of the proposed algorithm.
	\item We show the practical applicability of the proposed algorithm by running different implementations of our algorithm on real-world data.
\end{itemize}

\section{Preliminaries} \label{section_preliminaries}
In this section, we introduce the basic notions used throughout the paper.

Let $\Sigma$ be a finite set of edge labels. Define an \textit{edge-labeled directed graph} as a tuple $D = (V, E)$ with a set of nodes $V$ and a directed edge-relation $E \subseteq V \times \Sigma \times V$.  For a path $\pi$ in a graph $D$, we denote the unique word obtained by concatenating the labels of the edges along the path $\pi$ as $l(\pi)$. Also, we write $n \pi m$ to indicate that a path $\pi$ starts at the node $n \in V$ and ends at the node $m \in V$.

Following Hellings~\cite{hellingsRelational}, we deviate from the usual definition of a context-free grammar in \textit{Chomsky Normal Form}~\cite{chomsky} by not including a special starting non-terminal, which will be specified in the path queries to the graph. Since every context-free grammar can be transformed into an equivalent one in Chomsky Normal Form and checking that an empty string is in the language is trivial it is sufficient to consider only grammars of the following type. A \textit{context-free grammar} is a triple $G = (N, \Sigma, P)$, where $N$ is a finite set of non-terminals, $\Sigma$ is a finite set of terminals, and $P$ is a finite set of productions of the following forms:

\begin{itemize}
    \item $A \rightarrow B C$, for $A,B,C \in N$,
    \item $A \rightarrow x$, for $A \in N$ and $x \in \Sigma$.   
\end{itemize}

Note that we omit the rules of the form $A \rightarrow \varepsilon$, where $\varepsilon$ denotes an empty string. This does not restrict the applicability of our algorithm because only the empty paths $m \pi m$ correspond to an empty string $\varepsilon$.

We use the conventional notation $A \xrightarrow{*} w$ to denote that a string $w \in \Sigma^*$ can be derived from a non-terminal $A$ by some sequence of applications of the production rules from $P$. The \textit{language} of a grammar $G = (N,\Sigma,P)$ with respect to a start non-terminal $S \in N$ is defined by $$L(G_S) = \{w \in \Sigma^*~|~S \xrightarrow{*} w\}.$$

For a given graph $D = (V, E)$ and a context-free grammar $G = (N, \Sigma, P)$, we define \textit{context-free relations} $R_A \subseteq V \times V$, for every $A \in N$, such that $$R_A = \{(n,m)~|~\exists n \pi m~(l(\pi) \in L(G_A))\}.$$

We define a binary operation $(~\cdot~)$ on arbitrary subsets $N_1 , N_2$ of $N$ with respect to a context-free grammar $G = (N, \Sigma, P)$ as $$N_1 \cdot N_2 = \{A~|~\exists B \in N_1, \exists C \in N_2 \text{ such that }(A \rightarrow B C) \in P\}.$$

Using this binary operation as a multiplication of subsets of $N$ and union of sets as an addition, we can define a \textit{matrix multiplication}, $a \times b = c$, where $a$ and $b$ are matrices of a suitable size that have subsets of $N$ as elements, as $$c_{i,j} = \bigcup^{n}_{k=1}{a_{i,k} \cdot b_{k,j}}.$$

According to Valiant~\cite{valiant}, we define the \textit{transitive closure} of a square matrix $a$ as $a^+ = a^{(1)}_+ \cup a^{(2)}_+ \cup \cdots$ where $a^{(1)}_+ = a$ and $$a^{(i)}_+ = \bigcup^{i-1}_{j=1}{a^{(j)}_+ \times a^{(i-j)}_+}, ~i \ge 2.$$

We enumerate the positions in the input string $s$ of Valiant's algorithm from 0 to the length of $s$. Valiant proposes the algorithm for computing this transitive closure only for upper triangular matrices, which is sufficient since for Valiant's algorithm the input is essentially a directed chain and for all possible paths $n \pi m$ in a directed chain $n < m$. In the context-free path querying input graphs can be arbitrary. For this reason, we introduce an algorithm for computing the transitive closure of an arbitrary square matrix.

For the convenience of further reasoning, we introduce another definition of the transitive closure of an arbitrary square matrix $a$ as $a^{cf} = a^{(1)} \cup a^{(2)} \cup \cdots$ where $a^{(1)} = a$ and $$a^{(i)} = a^{(i-1)} \cup (a^{(i-1)} \times a^{(i-1)}), ~i \ge 2.$$

To show the equivalence of these two definitions of transitive closure, we introduce the partial order $\succeq$ on matrices with the fixed size which have subsets of $N$ as elements. For square matrices $a, b$ of the same size, we denote $a \succeq b$ iff $a_{i,j} \supseteq b_{i,j}$, for every $i, j$. For these two definitions of transitive closure, the following lemmas and theorem hold.

\begin{lemma}\label{lemma:cf_geq_valiant}
	Let $G =(N,\Sigma,P)$ be a grammar, let $a$ be a square matrix. Then $a^{(k)} \succeq a^{(k)}_+$ for any $k \geq 1$.
\end{lemma}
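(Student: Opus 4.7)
The plan is to prove the inequality by strong induction on $k$, leaning on two elementary monotonicity facts about the operations involved. So before starting the induction, I would establish the following two lemmas as preliminary remarks.

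\textbf{Monotonicity of the sequence $\{a^{(i)}\}$.} Directly from the recurrence $a^{(i)} = a^{(i-1)} \cup (a^{(i-1)} \times a^{(i-1)})$, we read off $a^{(i)} \succeq a^{(i-1)}$. By transitivity of $\succeq$, this yields $a^{(k-1)} \succeq a^{(j)}$ for every $1 \le j \le k-1$.

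\textbf{Monotonicity of $\times$ under $\succeq$.} If $b \succeq b'$ and $c \succeq c'$ (with compatible sizes), then $b \times c \succeq b' \times c'$. Each entry of $b \times c$ is $\bigcup_{k}\{A \mid \exists B \in b_{i,k},\ \exists C \in c_{k,j},\ (A \to BC) \in P\}$; enlarging any $b_{i,k}$ or $c_{k,j}$ only enlarges this union, as does taking unions of larger sets over $k$.

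With these in hand, I would run the induction as follows. The base case $k=1$ is immediate since $a^{(1)} = a = a^{(1)}_+$. For the inductive step, assume $a^{(j)} \succeq a^{(j)}_+$ for all $1 \le j \le k-1$. Composing with the first monotonicity, $a^{(k-1)} \succeq a^{(j)} \succeq a^{(j)}_+$ for every $1 \le j \le k-1$, and likewise $a^{(k-1)} \succeq a^{(k-j)}_+$. Applying the second monotonicity termwise to the union defining $a^{(k)}_+$ gives
$$
a^{(k)}_+ \;=\; \bigcup_{j=1}^{k-1} a^{(j)}_+ \times a^{(k-j)}_+ \;\preceq\; a^{(k-1)} \times a^{(k-1)} \;\preceq\; a^{(k-1)} \cup \bigl(a^{(k-1)} \times a^{(k-1)}\bigr) \;=\; a^{(k)},
$$
closing the induction.

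There is no real obstacle here; the only thing to be careful about is that the inductive hypothesis needs to be combined with the monotonicity of the $a^{(i)}$ sequence so that a single matrix $a^{(k-1)}$ dominates every factor $a^{(j)}_+$ and $a^{(k-j)}_+$ appearing in the union, which is what lets us collapse the union over $j$ into a single product before absorbing it into $a^{(k)}$. Making the monotonicity of $\times$ explicit is the only place where one has to look inside the definition of matrix multiplication, and it is essentially immediate from the set-theoretic form of the entries.
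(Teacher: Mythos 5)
Your proof is correct and follows essentially the same route as the paper's: strong induction, using the monotonicity of the sequence $a^{(i)}$ to dominate every factor $a^{(j)}_+$ and $a^{(k-j)}_+$ by the single matrix $a^{(k-1)}$, then bounding the union defining $a^{(k)}_+$ by $a^{(k-1)} \times a^{(k-1)} \preceq a^{(k)}$. The only difference is that you state the monotonicity of $\times$ under $\succeq$ explicitly, which the paper uses implicitly; this is a minor improvement in rigor, not a different argument.
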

\begin{proof}(Proof by Induction)
	
	\textbf{Basis}: The statement of the lemma holds for $k = 1$, since $$a^{(1)} = a^{(1)}_+ = a.$$
	
	\textbf{Inductive step}: Assume that the statement of the lemma holds for any $k \leq (p - 1)$ and show that it also holds for $k = p$ where $p \geq 2$. For any $i \geq 2$ $$a^{(i)} = a^{(i-1)} \cup (a^{(i-1)} \times a^{(i-1)}) \Rightarrow a^{(i)} \succeq a^{(i-1)}.$$ Hence, by the inductive hypothesis, for any $i \leq (p-1)$ $$a^{(p-1)} \succeq a^{(i)} \succeq a^{(i)}_+.$$ Let $1 \leq j \leq (p - 1)$. The following holds $$(a^{(p-1)} \times a^{(p-1)}) \succeq (a^{(j)}_+ \times a^{(p-j)}_+),$$ since $a^{(p-1)} \succeq a^{(j)}_+$ and $a^{(p-1)} \succeq a^{(p-j)}_+$. By the definition, $$a^{(p)}_+ = \bigcup^{p-1}_{j=1}{a^{(j)}_+ \times a^{(p-j)}_+}$$ and from this it follows that $$(a^{(p-1)} \times a^{(p-1)}) \succeq a^{(p)}_+.$$ By the definition, $$a^{(p)} = a^{(p-1)} \cup (a^{(p-1)} \times a^{(p-1)}) \Rightarrow a^{(p)} \succeq (a^{(p-1)} \times a^{(p-1)}) \succeq a^{(p)}_+$$ and this completes the proof of the lemma.
\end{proof}

\begin{lemma}\label{lemma:valiant_geq_cf}
	Let $G =(N,\Sigma,P)$ be a grammar, let $a$ be a square matrix. Then for any $k \geq 1$ there is $j \geq 1$, such that $(\bigcup^{j}_{i=1}{a^{(i)}_+}) \succeq a^{(k)}$.
\end{lemma}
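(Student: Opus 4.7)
I would prove this by induction on $k$, showing that doubling the cutoff suffices at each step.

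\textbf{Base case.} For $k=1$ we have $a^{(1)} = a = a^{(1)}_+$, so $j=1$ works.

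\textbf{Inductive step.} Suppose for $k = p-1$ there exists $j$ with $\bigcup_{i=1}^{j} a^{(i)}_+ \succeq a^{(p-1)}$. Since $a^{(p)} = a^{(p-1)} \cup (a^{(p-1)} \times a^{(p-1)})$, it suffices to bound both summands. The first one is immediate from the inductive hypothesis. For the second, I would exploit the fact that the binary operation $(\cdot)$ on subsets of $N$ distributes over union from both sides (directly from its definition in terms of existential quantifiers over productions), and therefore so does matrix multiplication as defined in the preliminaries. This lets me expand
$$a^{(p-1)} \times a^{(p-1)} \preceq \Bigl(\bigcup_{i_1=1}^{j} a^{(i_1)}_+\Bigr) \times \Bigl(\bigcup_{i_2=1}^{j} a^{(i_2)}_+\Bigr) = \bigcup_{i_1,i_2=1}^{j} a^{(i_1)}_+ \times a^{(i_2)}_+.$$

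Now I would observe that each term $a^{(i_1)}_+ \times a^{(i_2)}_+$ appears in the defining union of $a^{(i_1+i_2)}_+$ (since $i_1,i_2 \geq 1$ ensures $i_1+i_2 \geq 2$), so it is dominated by $a^{(i_1+i_2)}_+$. As $i_1+i_2$ ranges over $\{2,\dots,2j\}$, we get
$$a^{(p-1)} \times a^{(p-1)} \preceq \bigcup_{i=2}^{2j} a^{(i)}_+ \preceq \bigcup_{i=1}^{2j} a^{(i)}_+.$$
Combining with the inductive hypothesis (whose bound up to $j$ is trivially dominated by the union up to $2j$), we conclude $\bigcup_{i=1}^{2j} a^{(i)}_+ \succeq a^{(p)}$, so $j' = 2j$ works.

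\textbf{Main obstacle.} The conceptually substantive point is not the induction itself but the distributivity of matrix multiplication over union, which must be justified before the above expansion is meaningful. Once that is noted, the rest is bookkeeping: the inductive step roughly doubles the cutoff, mirroring the exponential "doubling" that relates the Valiant-style closure $a^+$ to the naive iterative closure $a^{cf}$. Together with Lemma~\ref{lemma:cf_geq_valiant}, this lemma will yield the equivalence of the two closure definitions.
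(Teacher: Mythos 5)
Your proof follows essentially the same route as the paper's: induction on $k$ with the cutoff doubling from $j$ to $2j$ at each step. You are in fact slightly more careful than the paper, which asserts $(\bigcup_{i=1}^{2j} a^{(i)}_+) \succeq (\bigcup_{i=1}^{j} a^{(i)}_+) \times (\bigcup_{i=1}^{j} a^{(i)}_+)$ directly "from the definition" of $a^{(2j)}_+$, whereas you correctly justify it via distributivity of $\times$ over union together with the term-by-term bound $a^{(i_1)}_+ \times a^{(i_2)}_+ \preceq a^{(i_1+i_2)}_+$.
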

\begin{proof}(Proof by Induction)
	
	\textbf{Basis}: For $k = 1$ there is $j = 1$, such that $$a^{(1)}_+ = a^{(1)} = a.$$ Thus, the statement of the lemma holds for $k = 1$.
	
	\textbf{Inductive step}: Assume that the statement of the lemma holds for any $k \leq (p - 1)$ and show that it also holds for $k = p$ where $p \geq 2$. By the inductive hypothesis, there is $j \geq 1$, such that $$(\bigcup^{j}_{i=1}{a^{(i)}_+}) \succeq a^{(p-1)}.$$ By the definition, $$a^{(2j)}_+ = \bigcup^{2j-1}_{i=1}{a^{(i)}_+ \times a^{(2j-i)}_+}$$ and from this it follows that $$(\bigcup^{2j}_{i=1}{a^{(i)}_+}) \succeq (\bigcup^{j}_{i=1}{a^{(i)}_+}) \times (\bigcup^{j}_{i=1}{a^{(i)}_+}) \succeq (a^{(p-1)} \times a^{(p-1)}).$$ The following holds $$(\bigcup^{2j}_{i=1}{a^{(i)}_+}) \succeq a^{(p)} = a^{(p-1)} \cup (a^{(p-1)} \times a^{(p-1)}),$$ since $$(\bigcup^{2j}_{i=1}{a^{(i)}_+}) \succeq (\bigcup^{j}_{i=1}{a^{(i)}_+}) \succeq a^{(p-1)}$$ and $$(\bigcup^{2j}_{i=1}{a^{(i)}_+}) \succeq (a^{(p-1)} \times a^{(p-1)}).$$ Therefore there is $2j$, such that $$(\bigcup^{2j}_{i=1}{a^{(i)}_+}) \succeq a^{(p)}$$ and this completes the proof of the lemma.	
\end{proof}

\begin{mytheorem}\label{thm:closures}
	Let $G =(N,\Sigma,P)$ be a grammar, let $a$ be a square matrix. Then $a^+ = a^{cf}$.
\end{mytheorem}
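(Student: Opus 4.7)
The plan is to prove $a^+ = a^{cf}$ by establishing the partial-order relation $\succeq$ in both directions and then invoking antisymmetry of $\succeq$, which is immediate from its entrywise definition as set inclusion. Throughout, I would use that countable unions interpreted entrywise are monotone under $\succeq$: if $b_k \succeq c_k$ for every $k$, then $\bigcup_k b_k \succeq \bigcup_k c_k$, simply because each entry on the left is the union of supersets of the corresponding entries on the right.

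First I would unfold the definitions to read $a^+ = \bigcup_{i \geq 1} a^{(i)}_+$ and $a^{cf} = \bigcup_{i \geq 1} a^{(i)}$. For the direction $a^{cf} \succeq a^+$, I would apply Lemma~\ref{lemma:cf_geq_valiant} term-by-term: for every $k \geq 1$ we have $a^{(k)} \succeq a^{(k)}_+$, so unioning over all $k$ using the monotonicity observation above gives $a^{cf} \succeq a^+$ at once.

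For the reverse direction $a^+ \succeq a^{cf}$, I would fix an arbitrary $k \geq 1$ and invoke Lemma~\ref{lemma:valiant_geq_cf} to obtain some $j = j(k) \geq 1$ with $\bigcup_{i=1}^{j} a^{(i)}_+ \succeq a^{(k)}$. Since this finite union is dominated entrywise by the infinite union defining $a^+$, we conclude $a^+ \succeq a^{(k)}$ for each $k$. Taking the union over $k$ on the right-hand side then yields $a^+ \succeq a^{cf}$. Combining the two containments gives $a^+ = a^{cf}$.

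I do not expect any substantive obstacle here: the two preceding lemmas carry the entire technical burden (the inductive handling of the doubled index $2j$ in Lemma~\ref{lemma:valiant_geq_cf} is precisely what lets the finite union in $a^+$ catch up with the geometric-style growth in $a^{cf}$). The only care needed in the wrap-up is an explicit check that $\succeq$ is antisymmetric and that union preserves $\succeq$; both follow directly from the elementwise superset definition, so the proof should reduce to little more than quoting the two lemmas and taking unions.
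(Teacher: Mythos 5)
Your proposal is correct and follows essentially the same route as the paper's own proof: both directions are obtained by quoting Lemma~\ref{lemma:cf_geq_valiant} termwise for $a^{cf} \succeq a^+$ and Lemma~\ref{lemma:valiant_geq_cf} plus domination of finite unions by the infinite union for $a^+ \succeq a^{cf}$, then concluding by antisymmetry of $\succeq$. Your explicit remarks on monotonicity of unions and antisymmetry are left implicit in the paper but add nothing substantively different.
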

\begin{proof}
	
	By the lemma~\ref{lemma:cf_geq_valiant}, for any $k \geq 1$, $a^{(k)} \succeq a^{(k)}_+$. Therefore $$a^{cf} = a^{(1)} \cup a^{(2)} \cup \cdots \succeq a^{(1)}_+ \cup a^{(2)}_+ \cup \cdots = a^+.$$ By the lemma~\ref{lemma:valiant_geq_cf}, for any $k \geq 1$ there is $j \geq 1$, such that $$(\bigcup^{j}_{i=1}{a^{(i)}_+}) \succeq a^{(k)}.$$ Hence $$a^+ = (\bigcup^{\infty}_{i=1}{a^{(i)}_+}) \succeq a^{(k)},$$ for any $k \geq 1$. Therefore $$a^+ \succeq a^{(1)} \cup a^{(2)} \cup \cdots = a^{cf}.$$ Since $a^{cf} \succeq a^+$ and $a^+ \succeq a^{cf}$, $$a^+ = a^{cf}$$ and this completes the proof of the theorem.
\end{proof}

Further, in this paper, we use the transitive closure $a^{cf}$ instead of $a^+$ and, by the theorem~\ref{thm:closures}, an algorithm for computing $a^{cf}$ also computes Valiant's transitive closure $a^+$.

\section{Related works} \label{section_related}
Problems in many areas can be reduced to one of the formal-languages-constrained path problems~\cite{barrett2000formal}. For example, various problems of static code analysis~\cite{bastani2015specification,xu2009scaling} can be formulated in terms of the context-free language reachability~\cite{reps1998program} or in terms of the linear conjunctive language reachability~\cite{zhang2017context}. 

One of the well-known problems in the area of graph database analysis is the language-constrained path querying. For example, the regular language constrained path querying~\cite{reutter2017regular, fan2011adding, abiteboul1997regular, nole2016regular}, and the context-free language constrained path querying.

There are a number of solutions~\cite{hellingsRelational, GraphQueryWithEarley, RDF} for context-free path query evaluation w.r.t. the relational query semantics, which employ such parsing algorithms as CYK~\cite{kasami, younger} or Earley~\cite{Grune}. Other examples of path query semantics are single-path and \textit{all-path query semantics}. The all-path query semantics requires presenting all possible paths from node $m$ to node $n$ whose labeling is derived from a non-terminal $A$ for all triples $(A, m, n)$ evaluated using the relational query semantics. Hellings~\cite{hellingsPathQuerying} presented algorithms for the context-free path query evaluation using the single-path and the all-path query semantics. If a context-free path query w.r.t. the all-path query semantics is evaluated on cyclic graphs, then the query result can be an infinite set of paths. For this reason, in~\cite{hellingsPathQuerying}, annotated grammars are proposed as a possible solution.

In~\cite{GLL}, the algorithm for context-free path query evaluation w.r.t. the all-path query semantics is proposed. This algorithm is based on the generalized top-down parsing algorithm~---~GLL~\cite{scott2010gll}. This solution uses derivation trees for the result representation which is more native for grammar-based analysis. The algorithms in~\cite{GLL, hellingsPathQuerying} for the context-free path query evaluation w.r.t. the all-path query semantics can also be used for query evaluation using the relational and the single-path semantics.

Our work is inspired by Valiant~\cite{valiant}, who proposed an algorithm for general context-free recognition in less than cubic time. This algorithm computes the same parsing table as the CYK algorithm but does this by offloading the most intensive computations into calls to a Boolean matrix multiplication procedure. This approach not only provides an asymptotically more efficient algorithm but it also allows us to effectively apply GPGPU computing techniques. Valiant's algorithm computes the transitive closure $a^+$ of a square upper triangular matrix $a$. Valiant also showed that the matrix multiplication operation $(\times)$ is essentially the same as $|N|^2$ Boolean matrix multiplications, where $|N|$ is the number of non-terminals of the given context-free grammar in Chomsky normal form.

Hellings~\cite{hellingsRelational} presented an algorithm for the context-free path query evaluation using the relational query semantics. According to Hellings, for a given graph $D = (V, E)$ and a grammar $G = (N, \Sigma, P)$ the context-free path query evaluation w.r.t. the relational query semantics reduces to a calculation of the context-free relations $R_A$. Thus, in this paper, we focus on the calculation of these context-free relations. Also, Hellings~\cite{hellingsRelational} presented an algorithm for the context-free path query evaluation using the single-path query semantics which evaluates paths of minimal length for all triples $(A,m,n)$, but also noted that the length of these paths is not necessarily upper bounded. Thus, in this paper, we evaluate an arbitrary path for all triples $(A,m,n)$.

Yannakakis~\cite{transitive-closure} analyzed the reducibility of various path querying problems to the calculation of the transitive closure. He formulated a problem of Valiant's technique generalization to the context-free path query evaluation w.r.t. the relational query semantics. Also, he assumed that this technique cannot be generalized for arbitrary graphs, though it does for acyclic graphs.

Thus, the possibility of reducing the context-free path query evaluation using the relational and the single-path query semantics to the calculation of the transitive closure is an open problem.

\section{Context-free path querying by the calculation of transitive closure}
In this section, we show how the context-free path query evaluation using the relational query semantics can be reduced to the calculation of matrix transitive closure $a^{cf}$, prove the correctness of this reduction, introduce an algorithm for computing the transitive closure $a^{cf}$, and provide a step-by-step demonstration of this algorithm on a small example.

\subsection{Reducing context-free path querying to transitive closure} \label{section_reducing}
In this section, we show how the context-free relations $R_A$ can be calculated by computing the transitive closure $a^{cf}$.

Let $G = (N,\Sigma,P)$ be a grammar and $D = (V, E)$ be a graph. We enumerate the nodes of the graph $D$ from 0 to $(|V| - 1)$. We initialize the elements of the $|V| \times |V|$ matrix $a$ with $\varnothing$. Further, for every $i$ and $j$ we set $$a_{i,j} = \{A_k~|~((i,x,j) \in E) \wedge ((A_k \rightarrow x) \in P)\}.$$ Finally, we compute the transitive closure $$a^{cf} = a^{(1)} \cup a^{(2)} \cup \cdots$$ where $$a^{(i)} = a^{(i-1)} \cup (a^{(i-1)} \times a^{(i-1)}),$$ for $i \ge 2$ and $a^{(1)} = a$. For the transitive closure $a^{cf}$, the following statements hold.

\begin{lemma}\label{lemma:cf}
Let $D = (V,E)$ be a graph, let $G =(N,\Sigma,P)$ be a grammar. Then for any $i, j$ and for any non-terminal $A \in N$, $A \in a^{(k)}_{i,j}$ iff $(i,j) \in R_A$ and $i \pi j$, such that there is a derivation tree of the height $h \leq k$ for the string $l(\pi)$ and a context-free grammar $G_A = (N,\Sigma,P,A)$.
\end{lemma}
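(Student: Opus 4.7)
The plan is to prove the biconditional by induction on $k$, since both sides of the equivalence decompose naturally along the recursive structure of $a^{(k)}$ and along the two possible shapes of a Chomsky Normal Form derivation tree (a single terminal child, or two non-terminal children).

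For the base case $k = 1$, I would unfold the definitions: $a^{(1)}_{i,j} = a_{i,j} = \{A \mid (i,x,j) \in E \wedge (A \to x) \in P\}$, and a derivation tree of height $1$ in $G_A$ must consist of the root $A$ with a single terminal child $x$, corresponding to the production $A \to x$ and the single-edge path $i \xrightarrow{x} j$. This matches exactly, so both directions hold trivially.

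For the inductive step, I assume the lemma for all $k' \leq k - 1$ and argue each direction separately. In the forward direction, if $A \in a^{(k)}_{i,j} = a^{(k-1)}_{i,j} \cup (a^{(k-1)} \times a^{(k-1)})_{i,j}$, then either $A \in a^{(k-1)}_{i,j}$ and the induction hypothesis gives a derivation tree of height $\leq k-1 \leq k$, or else there exist some $m$ and $B,C$ with $A \to BC \in P$, $B \in a^{(k-1)}_{i,m}$, $C \in a^{(k-1)}_{m,j}$; by IH I obtain paths $i\pi_1 m$ and $m \pi_2 j$ with derivation trees of heights $\leq k-1$ for $l(\pi_1)$ and $l(\pi_2)$, and I glue them together under the root $A \to BC$ to form a derivation tree of height $\leq k$ for $l(\pi_1\pi_2)$, with $i(\pi_1\pi_2)j$ witnessing $(i,j) \in R_A$. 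In the backward direction, given a path $i\pi j$ with a derivation tree $T$ of height $h \leq k$ rooted at $A$, I split on whether $T$ applies a rule of form $A \to x$ (height $1$: the single-edge case, giving $A \in a^{(1)}_{i,j}$, which sits inside $a^{(k)}_{i,j}$ by the monotonicity $a^{(k)} \succeq a^{(k-1)}$ that follows directly from the recurrence) or $A \to BC$ (height $\geq 2$: the path factors as $\pi_1\pi_2$ through some intermediate node $m$, each subtree has height $\leq h - 1 \leq k - 1$, IH yields $B \in a^{(k-1)}_{i,m}$ and $C \in a^{(k-1)}_{m,j}$, and the definitions of the binary product and of $a^{(k)}$ place $A$ in $a^{(k)}_{i,j}$).

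The only subtle points are (i) keeping the height bookkeeping consistent across the split $\pi = \pi_1\pi_2$, which relies on the CNF assumption that every internal node has exactly two children with strictly smaller subtrees; and (ii) the tacit monotonicity $a^{(k-1)} \subseteq a^{(k)}$ elementwise, used to absorb the base case into the inductive step. Neither presents a real obstacle, so the main work is simply writing out the two directions carefully and invoking the induction hypothesis with the correct indices.
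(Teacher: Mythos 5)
Your proposal is correct and follows essentially the same route as the paper's proof: induction on $k$, a base case identifying $a^{(1)}_{i,j}$ with single-edge paths and height-$1$ trees, and an inductive step that splits along $a^{(p)} = a^{(p-1)} \cup (a^{(p-1)} \times a^{(p-1)})$, gluing or splitting CNF derivation trees at an intermediate node $r$. The only difference is presentational: you write out the backward direction as an explicit case analysis on the root production, whereas the paper compresses both directions into a chain of \emph{iff} statements.
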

\begin{proof}(Proof by Induction)

\textbf{Basis}: Show that the statement of the lemma holds for $k = 1$. For any $i, j$ and for any non-terminal $A \in N$, $A \in a^{(1)}_{i,j}$ iff there is $i \pi j$ that consists of a unique edge $e$ from the node $i$ to the node $j$ and $(A \rightarrow x) \in P$ where $x = l(\pi)$. Therefore $(i,j) \in R_A$ and there is a derivation tree of the height $h = 1$, shown in Figure~\ref{tree1}, for the string $x$ and a context-free grammar $G_A = (N,\Sigma,P,A)$. Thus, it has been shown that the statement of the lemma holds for $k = 1$.

\begin{figure}[h!]
 \centering
 \includegraphics[width=2cm]{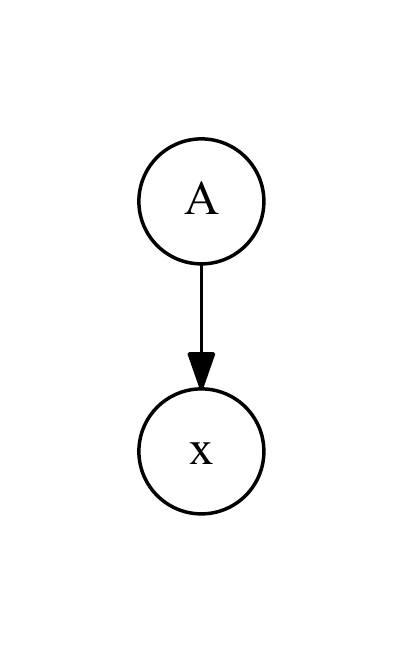}
 \caption{The derivation tree of the height $h = 1$ for the string $x = l(\pi)$.}
 \label{tree1}
\end{figure}

\textbf{Inductive step}: Assume that the statement of the lemma holds for any $k \leq (p - 1)$ and show that it also holds for $k = p$ where $p \geq 2$. For any $i, j$ and for any non-terminal $A \in N$, $$A \in a^{(p)}_{i,j} \text{ iff } A \in a^{(p-1)}_{i,j} \text{ or } A \in (a^{(p-1)} \times a^{(p-1)})_{i,j},$$ since $$a^{(p)} = a^{(p-1)} \cup (a^{(p-1)} \times a^{(p-1)}).$$

Let $A \in a^{(p-1)}_{i,j}$. By the inductive hypothesis, $A \in a^{(p-1)}_{i,j}$ iff $(i,j) \in R_A$ and there exists $i \pi j$, such that there is a derivation tree of the height $h \leq (p-1)$ for the string $l(\pi)$ and a context-free grammar $G_A = (N,\Sigma,P,A)$. The statement of the lemma holds for $k = p$ since the height $h$ of this tree is also less than or equal to $p$.

Let $A \in (a^{(p-1)} \times a^{(p-1)})_{i,j}$. By the definition of the binary operation $(\cdot)$ on arbitrary subsets, $A \in (a^{(p-1)} \times a^{(p-1)})_{i,j}$ iff there are $r$, $B \in a^{(p-1)}_{i,r}$ and $C \in a^{(p-1)}_{r,j}$, such that $(A \rightarrow B C) \in P$. Hence, by the inductive hypothesis, there are $i \pi_1 r$ and $r \pi_2 j$, such that $(i,r) \in R_B$ and $(r,j) \in R_C$, and there are the derivation trees $T_B$ and $T_C$ of heights $h_1 \leq (p-1)$ and $h_2 \leq (p-1)$ for the strings $w_1 = l(\pi_1)$, $w_2 = l(\pi_2)$ and the context-free grammars $G_B$, $G_C$ respectively. Thus, the concatenation of paths $\pi_1$ and $\pi_2$ is $i \pi j$, where $(i,j) \in R_A$ and there is a derivation tree of the height $h = 1 + max(h_1, h_2)$, shown in Figure~\ref{tree2}, for the string $w = l(\pi)$ and a context-free grammar $G_A$.

\begin{figure}[h!]
 \centering
 \includegraphics[width=5cm]{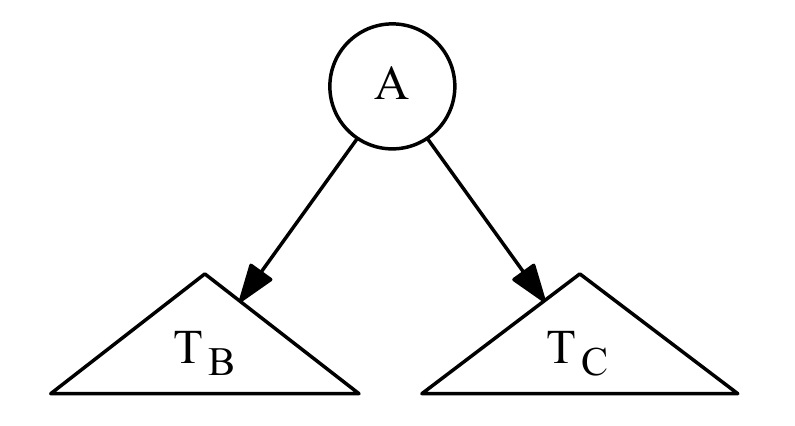}
 \caption{The derivation tree of the height $h = 1 + max(h_1, h_2)$ for the string $w = l(\pi)$, where $T_B$ and $T_C$ are the derivation trees for strings $w_1$ and $w_2$ respectively.}
 \label{tree2}
\end{figure}

The statement of the lemma holds for $k = p$ since the height $h = 1 + max(h_1, h_2) \leq p$. This completes the proof of the lemma.
\end{proof}

\begin{mytheorem}\label{thm:correct}
 Let $D = (V,E)$ be a graph and let $G =(N,\Sigma,P)$ be a grammar. Then for any $i, j$ and for any non-terminal $A \in N$, $A \in a^{cf}_{i,j}$ iff $(i,j) \in R_A$.
\end{mytheorem}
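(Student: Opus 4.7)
The plan is to derive the theorem directly from Lemma~\ref{lemma:cf} by taking the union over all $k \geq 1$, since by definition $a^{cf} = \bigcup_{k \geq 1} a^{(k)}$. The two directions of the biconditional will be handled separately, and both reduce to essentially bookkeeping arguments once Lemma~\ref{lemma:cf} is invoked.

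For the forward direction ($\Rightarrow$), I would argue as follows. Suppose $A \in a^{cf}_{i,j}$. Then by the definition of $a^{cf}$ as the union $a^{(1)} \cup a^{(2)} \cup \cdots$, there exists some $k \geq 1$ with $A \in a^{(k)}_{i,j}$. Applying Lemma~\ref{lemma:cf} to this $k$ immediately yields $(i,j) \in R_A$, which is the desired conclusion; the existence of the bounded-height derivation tree given by the lemma is extra information that is not needed here.

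For the backward direction ($\Leftarrow$), suppose $(i,j) \in R_A$. By the definition of $R_A$, there is a path $i \pi j$ with $l(\pi) \in L(G_A)$, so $A \xrightarrow{*} l(\pi)$. Since $G$ is in the (modified) Chomsky Normal Form, every such derivation corresponds to a finite derivation tree in $G_A$; let $k$ denote its height. By Lemma~\ref{lemma:cf} applied to this $k$, the existence of the path $i \pi j$ and of a derivation tree of height $\leq k$ for $l(\pi)$ in $G_A$ implies $A \in a^{(k)}_{i,j}$, and therefore $A \in a^{cf}_{i,j}$.

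The main potential obstacle is a subtle one: one has to be comfortable with the fact that Lemma~\ref{lemma:cf} is stated as an iff, so the direction we actually need in each half of the theorem matches one direction of the lemma. Concretely, the forward direction of the theorem uses the $\Rightarrow$ direction of Lemma~\ref{lemma:cf} (existence of some $k$ gives membership in $R_A$), while the backward direction uses the $\Leftarrow$ direction (existence of a path plus a derivation tree of some finite height $k$ gives membership in $a^{(k)}$, hence in $a^{cf}$). The only nontrivial step is noting that any $\pi$ with $l(\pi) \in L(G_A)$ admits a derivation tree of some finite height, which is guaranteed by the CNF assumption together with the exclusion of $\varepsilon$-productions.
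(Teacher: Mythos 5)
Your proof is correct and follows essentially the same route as the paper's: both reduce the theorem to Lemma~\ref{lemma:cf} via the identity $a^{cf} = \bigcup_{k\geq 1} a^{(k)}$. In fact your version is slightly more careful than the paper's one-paragraph argument, since you make explicit the step the paper leaves implicit in the backward direction, namely that any $\pi$ with $l(\pi) \in L(G_A)$ admits a derivation tree of some finite height $k$, which is what licenses applying the lemma at that $k$.
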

\begin{proof}

Since the matrix $a^{cf} = a^{(1)} \cup a^{(2)} \cup \cdots,$ for any $i, j$ and for any non-terminal $A \in N$, $A \in a^{cf}_{i,j}$ iff there is $k \geq 1$, such that $A \in a^{(k)}_{i,j}$. By the lemma~\ref{lemma:cf}, $A \in a^{(k)}_{i,j}$ iff $(i,j) \in R_A$ and there is $i \pi j$, such that there is a derivation tree of the height $h \leq k$ for the string $l(\pi)$ and a context-free grammar $G_A = (N,\Sigma,P,A)$. This completes the proof of the theorem.
\end{proof}

We can, therefore, determine whether $(i,j) \in R_A$ by asking whether $A \in a^{cf}_{i,j}$. Thus, we show how the context-free relations $R_A$ can be calculated by computing the transitive closure $a^{cf}$ of the matrix $a$.

\subsection{The algorithm} \label{section_algorithm}
In this section, we introduce an algorithm for calculating the transitive closure $a^{cf}$ which was discussed in Section~\ref{section_reducing}.

Let $D = (V, E)$ be the input graph and $G = (N,\Sigma,P)$ be the input grammar.

\begin{algorithm}[H]
\begin{algorithmic}[1]
\caption{Context-free recognizer for graphs}
\label{alg:graphParse}
\Function{contextFreePathQuerying}{D, G}
    
    \State{$n \gets$ the number of nodes in $D$}
    \State{$E \gets$ the directed edge-relation from $D$}
    \State{$P \gets$ the set of production rules in $G$}
    \State{$T \gets$ the matrix $n \times n$ in which each element is $\varnothing$}
    \ForAll{$(i,x,j) \in E$}
    \Comment{Matrix initialization}
        \State{$T_{i,j} \gets T_{i,j} \cup \{A~|~(A \rightarrow x) \in P \}$}
    \EndFor    
    \While{matrix $T$ is changing}
       
        \State{$T \gets T \cup (T \times T)$}
        \Comment{Transitive closure $T^{cf}$ calculation} 
    \EndWhile
\State \Return $T$
\EndFunction
\end{algorithmic}
\end{algorithm}

Note that the matrix initialization in lines \textbf{6-7} of the Algorithm~\ref{alg:graphParse} can handle arbitrary graph $D$. For example, if a graph $D$ contains multiple edges $(i,x_1,j)$ and $(i,x_2,j)$ then both the elements of the set $\{A~|~(A \rightarrow x_1) \in P \}$ and the elements of the set $\{A~|~(A \rightarrow x_2) \in P \}$ will be added to $T_{i,j}$.

We need to show that the Algorithm~\ref{alg:graphParse} terminates in a finite number of steps. Since each element of the matrix $T$ contains no more than $|N|$ non-terminals, the total number of non-terminals in the matrix $T$ does not exceed $|V|^2|N|$. Therefore, the following theorem holds.

\begin{mytheorem}\label{thm:finite}
 Let $D = (V,E)$ be a graph and let $G =(N,\Sigma,P)$ be a grammar. The Algorithm~\ref{alg:graphParse} terminates in a finite number of steps. 
\end{mytheorem}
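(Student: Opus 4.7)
The plan is to exploit the two observations the authors already foreground just before the statement: the update rule is monotone (it only adds elements), and the ambient set in which everything lives is finite. From these, a standard strict-monotonicity-with-bounded-codomain argument yields termination.

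First, I would establish monotonicity of the sequence of matrices produced by the loop. Writing $T^{(0)}$ for the matrix after initialization (lines \textbf{5--7}) and $T^{(t)}$ for its value after $t$ iterations of the while loop, the assignment $T \gets T \cup (T \times T)$ gives $T^{(t+1)}_{i,j} \supseteq T^{(t)}_{i,j}$ for every cell $(i,j)$, so $T^{(t+1)} \succeq T^{(t)}$ in the partial order introduced in Section~\ref{section_preliminaries}. Hence the entries of $T$ are nondecreasing under set inclusion throughout the execution.

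Next, I would introduce a nonnegative integer measure on matrices, namely $\Phi(T) = \sum_{i,j} |T_{i,j}|$, i.e.\ the total number of non-terminal occurrences across all cells. By monotonicity, $\Phi(T^{(t+1)}) \geq \Phi(T^{(t)})$. Moreover, the loop guard ``matrix $T$ is changing'' ensures that on every non-final iteration at least one cell gains at least one new non-terminal, so $\Phi(T^{(t+1)}) \geq \Phi(T^{(t)}) + 1$, a strict increase. Finally, since $T_{i,j} \subseteq N$ and there are $|V|^2$ cells, we have the uniform upper bound $\Phi(T) \leq |V|^2 |N|$ at every step.

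Combining these, the number of iterations of the while loop is bounded by $|V|^2 |N| - \Phi(T^{(0)}) \leq |V|^2 |N|$, which is finite. Together with the fact that matrix initialization and each application of $\cup$ and $\times$ on $|V|\times|V|$ matrices over subsets of the finite set $N$ take finitely many steps, this yields termination. There is no real obstacle here; the only thing worth being careful about is stating the strict-increase step cleanly, since it is precisely the loop guard (and not the update rule alone) that converts nondecrease of $\Phi$ into strict increase on every executed iteration.
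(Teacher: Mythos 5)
Your proposal is correct and follows essentially the same argument as the paper: the update in line \textbf{9} only adds non-terminals and never removes them, and since the total number of non-terminal occurrences over all cells is bounded by $|V|^2|N|$, the loop can execute at most that many iterations. Your version merely makes the paper's terse counting argument explicit by naming the potential function $\Phi$ and pointing out that the loop guard is what forces a strict increase on each executed iteration.
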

\begin{proof}
It is sufficient to show, that the operation in the line \textbf{9} of the Algorithm~\ref{alg:graphParse} changes the matrix $T$ only finite number of times. Since this operation can only add non-terminals to some elements of the matrix $T$, but not remove them, it can change the matrix $T$ no more than $|V|^2|N|$ times.
\end{proof}

Denote the number of elementary operations executed by the algorithm of multiplying two $n \times n$ Boolean matrices as $BMM(n)$. According to Valiant, the matrix multiplication operation in the line \textbf{9} of the Algorithm~\ref{alg:graphParse} can be calculated in $O(|N|^2 BMM(|V|))$. Denote the number of elementary operations executed by the matrix union operation of two $n \times n$ Boolean matrices as $BMU(n)$. Similarly, it can be shown that the matrix union operation in the line \textbf{9} of the Algorithm~\ref{alg:graphParse} can be calculated in $O(|N|^2 BMU(n))$. Since the line \textbf{9} of the Algorithm~\ref{alg:graphParse} is executed no more than $|V|^2|N|$ times, the following theorem holds.

\begin{mytheorem}\label{thm:time}
 Let $D = (V,E)$ be a graph and let $G =(N,\Sigma,P)$ be a grammar. The Algorithm~\ref{alg:graphParse} calculates the transitive closure $T^{cf}$ in $O(|V|^2|N|^3(BMM(|V|) + BMU(|V|)))$.
\end{mytheorem}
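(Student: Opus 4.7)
The plan is to bound the total running time by the product of two quantities: the cost of a single execution of the loop body (line 9), and the maximum number of times that body is executed. Both ingredients have already been isolated in the paragraph preceding the theorem, so the proof amounts to assembling them carefully.

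First, I would bound the cost of one iteration. Line 9 performs the matrix multiplication $T \times T$ followed by a union with the current $T$. By Valiant's observation (cited just above the theorem), the matrix multiplication operation $(\times)$ on $|V| \times |V|$ matrices whose entries are subsets of $N$ reduces to $|N|^2$ Boolean matrix multiplications on $|V| \times |V|$ Boolean matrices, so it costs $O(|N|^2 BMM(|V|))$. By the same slicing argument applied componentwise, the union of two such matrices costs $O(|N|^2 BMU(|V|))$. Therefore one execution of line 9 costs $O(|N|^2 (BMM(|V|) + BMU(|V|)))$.

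Second, I would bound the number of iterations using the counting argument already employed in the proof of Theorem~\ref{thm:finite}. The union with $T \times T$ can only add non-terminals to entries of $T$ and never remove them; since each of the $|V|^2$ entries can contain at most $|N|$ non-terminals, the matrix $T$ can strictly grow at most $|V|^2|N|$ times. Thus the while-loop condition is satisfied at most $|V|^2|N| + 1$ times, so line 9 is executed $O(|V|^2|N|)$ times.

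Multiplying these two bounds yields
\[
O\!\bigl(|V|^2|N|\bigr) \cdot O\!\bigl(|N|^2 (BMM(|V|) + BMU(|V|))\bigr) = O\!\bigl(|V|^2|N|^3 (BMM(|V|) + BMU(|V|))\bigr),
\]
which is the claimed complexity. The only subtlety worth noting is the off-by-one caused by the final, unproductive iteration needed to detect that $T$ has stabilized; this contributes a constant factor and does not affect the asymptotic estimate. The matrix-initialization phase in lines 6--7 is absorbed into the bound as well, since it takes $O(|E|)$ time, which is dominated by even a single execution of line 9.
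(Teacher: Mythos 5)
Your proposal is correct and follows essentially the same route as the paper: the paper likewise bounds a single execution of line 9 by $O(|N|^2 BMM(|V|))$ for the multiplication plus $O(|N|^2 BMU(|V|))$ for the union, bounds the number of executions of line 9 by $|V|^2|N|$ via the monotone-growth argument from Theorem~\ref{thm:finite}, and multiplies the two bounds. Your added remarks on the final unproductive iteration and the $O(|E|)$ initialization cost are minor refinements the paper leaves implicit.
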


\subsection{An example} \label{section_example}
In this section, we provide a step-by-step demonstration of the proposed algorithm. For this, we consider the classical \textit{same-generation query}~\cite{FndDB}.

The \textbf{example query} is based on the context-free grammar $G = (N, \Sigma, P)$ where:
\begin{itemize}
    \item The set of non-terminals $N = \{S\}$.
    \item The set of terminals $$\Sigma = \{subClassOf, subClassOf^{-1}, type, type^{-1}\}.$$
    \item The set of production rules $P$ is presented in Figure~\ref{ProductionRulesExampleQuery}.
\end{itemize}

\begin{figure}[h]
   \[
\begin{array}{rccl}
   0: & S & \rightarrow & \text{\textit{subClassOf}}^{-1} \ S \ \text{\textit{subClassOf}} \\ 
   1: & S & \rightarrow & \text{\textit{type}}^{-1} \ S \ \text{\textit{type}} \\ 
   2: & S & \rightarrow & \text{\textit{subClassOf}}^{-1} \ \text{\textit{subClassOf}} \\ 
   3: & S & \rightarrow & \text{\textit{type}}^{-1} \ \text{\textit{type}} \\ 
\end{array}
\]
\caption{Production rules for the example query grammar.}
\label{ProductionRulesExampleQuery}
\end{figure}

Since the proposed algorithm processes only grammars in Chomsky normal form, we first transform the grammar $G$ into an equivalent grammar $G' = (N', \Sigma', P')$ in normal form, where:
\begin{itemize}
    \item The set of non-terminals $N' = \{S, S_1, S_2, S_3, S_4, S_5, S_6\}$.
    \item The set of terminals $$\Sigma' = \{subClassOf, subClassOf^{-1}, type, type^{-1}\}.$$
    \item The set of production rules $P'$ is presented in Figure~\ref{ProductionRulesExampleQueryCNF}.
\end{itemize}

\begin{figure}[h]
   \[
\begin{array}{rccl}
   0: & S & \rightarrow & S_1 \ S_5 \\
   1: & S & \rightarrow & S_3 \ S_6 \\
   2: & S & \rightarrow & S_1 \ S_2 \\
   3: & S & \rightarrow & S_3 \ S_4 \\
   4: & S_5 & \rightarrow & S \ S_2 \\
   5: & S_6 & \rightarrow & S \ S_4 \\
   6: & S_1 & \rightarrow & \text{\textit{subClassOf}}^{-1} \\ 
   7: & S_2 & \rightarrow & \text{\textit{subClassOf}} \\ 
   8: & S_3 & \rightarrow & \text{\textit{type}}^{-1} \\
   9: & S_4 & \rightarrow & \text{\textit{type}} \\ 
\end{array}
\]
\caption{Production rules for the example query grammar in normal form.}
\label{ProductionRulesExampleQueryCNF}
\end{figure}

We run the query on a graph presented in Figure~\ref{ExampleQueryGraph}.

\begin{figure}[h]
\[
    \includegraphics[width=8cm]{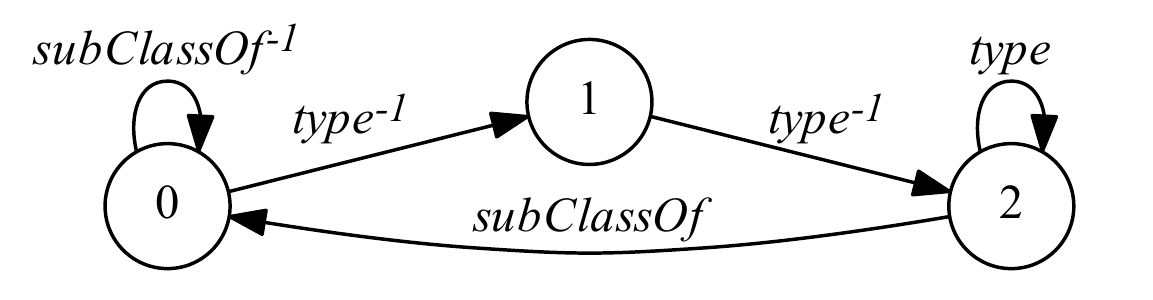}
\]
\caption{An input graph for the example query.}
\label{ExampleQueryGraph}
\end{figure}

We provide a step-by-step demonstration of the work with the given graph $D$ and grammar $G'$ of the Algorithm~\ref{alg:graphParse}. After the matrix initialization in lines \textbf{6-7} of the Algorithm~\ref{alg:graphParse}, we have a matrix $T_0$ presented in Figure~\ref{ExampleQueryInitMatrix}.

\begin{figure}[h]
\[
T_0 = \begin{pmatrix}
    \{S_1\} & \{S_3\} & \varnothing \\ \varnothing & \varnothing & \{S_3\} \\ \{S_2\} & \varnothing & \{S_4\}
\end{pmatrix}
\]
\caption{The initial matrix for the example query.}
\label{ExampleQueryInitMatrix}
\end{figure}

Let $T_i$ be the matrix $T$ obtained after executing the loop in lines \textbf{8-9} of the Algorithm~\ref{alg:graphParse} $i$ times. The calculation of the matrix $T_1$ is shown in Figure~\ref{ExampleQueryFirstIteration}.

\begin{figure}[h]
\[
T_0 \times T_0 = \begin{pmatrix}
    \varnothing & \varnothing & \varnothing \\ \varnothing & \varnothing & \{S\} \\ \varnothing & \varnothing & \varnothing
\end{pmatrix}
\]

\[
T_1 = T_0 \cup (T_0 \times T_0) = \begin{pmatrix}
    \{S_1\} & \{S_3\} & \varnothing \\ \varnothing & \varnothing & \{S_3, S\} \\ \{S_2\} & \varnothing & \{S_4\}
\end{pmatrix}
\]
\caption{The first iteration of computing the transitive closure for the example query.}
\label{ExampleQueryFirstIteration}
\end{figure}

When the algorithm at some iteration finds new paths in the graph $D$, then it adds corresponding nonterminals to the matrix $T$. For example, after the first loop iteration, non-terminal $S$ is added to the matrix $T$. This non-terminal is added to the element with a row index $i = 1$ and a column index $j = 2$. This means that there is $i\pi j$ (a path $\pi$ from the node 1 to the node 2), such that $S \xrightarrow{*} l(\pi)$. For example, such a path consists of two edges with labels $type^{-1}$ and $type$, and thus $S \xrightarrow{*} type^{-1} \ type$.

The calculation of the transitive closure is completed after $k$ iterations when a fixpoint is reached: $T_{k-1} = T_k$. For the example query, $k = 6$ since $T_6 = T_5$. The remaining iterations of computing the transitive closure are presented in Figure~\ref{ExampleQueryFinalIterations}.

\begin{figure}[h]
\[
T_2 = \begin{pmatrix}
    \{S_1\} & \{S_3\} & \varnothing \\ \{S_5\} & \varnothing & \{S_3, S, S_6\} \\ \{S_2\} & \varnothing & \{S_4\}
\end{pmatrix}
\]

\[
T_3 = \begin{pmatrix}
    \{S_1\} & \{S_3\} & \{S\} \\ \{S_5\} & \varnothing & \{S_3, S, S_6\} \\ \{S_2\} & \varnothing & \{S_4\}
\end{pmatrix}
\]

\[
T_4 = \begin{pmatrix}
    \{S_1, S_5\} & \{S_3\} & \{S, S_6\} \\ \{S_5\} & \varnothing & \{S_3, S, S_6\} \\ \{S_2\} & \varnothing & \{S_4\}
\end{pmatrix}
\]

\[
T_5 = \begin{pmatrix}
    \{S_1, S_5, S\} & \{S_3\} & \{S, S_6\} \\ \{S_5\} & \varnothing & \{S_3, S, S_6\} \\ \{S_2\} & \varnothing & \{S_4\}
\end{pmatrix}
\]
\caption{Remaining states of the matrix $T$.}
\label{ExampleQueryFinalIterations}
\end{figure}

Thus, the result of the Algorithm~\ref{alg:graphParse} for the example query is the matrix $T_5 = T_6$. Now, after constructing the transitive closure, we can construct the context-free relations $R_A$. These relations for each non-terminal of the grammar $G'$ are presented in Figure~\ref{ExampleQueryCFRelations}.

\begin{figure}[h]
\begin{eqnarray*}
R_S&=&\{(0,0),(0,2),(1,2)\},\\
R_{S_1}&=&\{(0,0)\},\\
R_{S_2}&=&\{(2,0)\}, \\
R_{S_3}&=&\{(0,1), (1,2)\}, \\
R_{S_4}&=&\{(2,2)\}, \\
R_{S_5}&=&\{(0,0), (1,0)\}, \\
R_{S_6}&=&\{(0,2), (1,2)\}.
\end{eqnarray*}
\caption{Context-free relations for the example query.}
\label{ExampleQueryCFRelations}
\end{figure}

By the context-free relation $R_S$, we can conclude that there are paths in a graph $D$ only from the node 0 to the node 0, from the node 0 to the node 2 or from the node 1 to the node 2, corresponding to the context-free grammar $G_S$. This conclusion is based on the fact that a grammar $G'_S$ is equivalent to the grammar $G_S$ and $L(G_S) = L(G_S')$.

\section{Context-free path querying using single-path semantics}
In this section, we show how the context-free path query evaluation using the single-path query semantics can be reduced to the calculation of matrix transitive closure $a^{cf}$ and prove the correctness of this reduction.

At the first step, we show how the calculation of matrix transitive closure $a^{cf}$ which was discussed in Section~\ref{section_reducing} can be modified to compute the length of some path $i \pi j$ for all $(i,j) \in R_A$, such that $A \xrightarrow{*} l(\pi)$. This is sufficient to solve the problem of context-free path query evaluation using the single-path query semantics since the required path of a fixed length from the node $i$ to the node $j$ can be found by a simple search and checking whether the labels of this path form a string which can be derived from a non-terminal $A$.

Let $G = (N,\Sigma,P)$ be a grammar and $D = (V, E)$ be a graph. We enumerate the nodes of the graph $D$ from 0 to $(|V| - 1)$. We initialize the $|V| \times |V|$ matrix $a$ with $\varnothing$. We associate each non-terminal in matrix $a$ with the corresponding path length. For convenience, each nonterminal $A$ in the $a_{i,j}$ is represented as a pair $(A,k)$ where $k$ is an associated path length. For every $i$ and $j$ we set $$a_{i,j} = \{(A_k,1)~|~((i,x,j) \in E) \wedge ((A_k \rightarrow x) \in P)\}$$ since initially all path lengths are equal to $1$. Finally, we compute the transitive closure $a^{cf}$ and if non-terminal $A$ is added to $a^{(p)}_{i,j}$ by using the production rule $(A \rightarrow B C) \in P$ where $(B,l_B) \in a^{(p-1)}_{i,k}$, $(C,l_C) \in a^{(p-1)}_{k,j}$, then the path length $l_A$ associated with non-terminal $A$ is calculated as $l_A = l_B + l_C$. Therefore $(A, l_A) \in a^{(p)}_{i,j}$. Note that if some non-terminal $A$ with an associated path length $l_1$ is in $a^{(p)}_{i,j}$, then the non-terminal $A$ is not added to the $a^{(k)}_{i,j}$ with an associated path length $l_2$ for all $l_2 \neq l_1$ and $k \geq p$. For the transitive closure $a^{cf}$, the following statements hold.

\begin{lemma}\label{lemma:singlepath}
	Let $D = (V,E)$ be a graph, let $G =(N,\Sigma,P)$ be a grammar. Then for any $i, j$ and for any non-terminal $A \in N$, if $(A,l_A) \in a^{(k)}_{i,j}$, then there is $i \pi j$, such that $A \xrightarrow{*} l(\pi)$ and the length of $\pi$ is equal to $l_A$.
\end{lemma}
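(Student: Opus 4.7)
The plan is to prove this by induction on $k$, in direct parallel to the proof of Lemma~\ref{lemma:cf}. Since the statement is only the forward direction (presence in $a^{(k)}_{i,j}$ implies existence of a witness path), the induction should be significantly lighter than for Lemma~\ref{lemma:cf}: I need only propagate the length annotation through the same concatenation argument that was used there.

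For the base case $k = 1$, membership $(A, l_A) \in a^{(1)}_{i,j}$ forces $l_A = 1$ by the initialization rule, and requires an edge $(i, x, j) \in E$ together with a production $A \rightarrow x$. The single-edge path $\pi$ from $i$ to $j$ then satisfies $A \xrightarrow{*} l(\pi)$ and has length $1 = l_A$, as required.

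For the inductive step, I would assume the claim for all $k \leq p - 1$ and suppose $(A, l_A) \in a^{(p)}_{i,j}$. Because $a^{(p)} = a^{(p-1)} \cup (a^{(p-1)} \times a^{(p-1)})$, either the pair was already present in $a^{(p-1)}_{i,j}$, in which case the inductive hypothesis delivers the desired path directly, or it was newly introduced by the matrix product via some production $A \rightarrow B C$ with witnesses $(B, l_B) \in a^{(p-1)}_{i,r}$ and $(C, l_C) \in a^{(p-1)}_{r,j}$ where $l_A = l_B + l_C$. In the latter case the hypothesis supplies paths $i \pi_1 r$ of length $l_B$ with $B \xrightarrow{*} l(\pi_1)$ and $r \pi_2 j$ of length $l_C$ with $C \xrightarrow{*} l(\pi_2)$; their concatenation $\pi$ satisfies $A \xrightarrow{*} B C \xrightarrow{*} l(\pi_1) l(\pi_2) = l(\pi)$ and has length $l_B + l_C = l_A$.

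The one mildly delicate point I expect is the asymmetry with Lemma~\ref{lemma:cf} introduced by the rule that at most one annotation $(A, l)$ is retained at each position $(i,j)$: once $(A, l_A)$ has been recorded, subsequent iterations will not insert $(A, l_A')$ for $l_A' \neq l_A$ even when an alternative derivation would suggest it. However, the lemma is purely existential with respect to the specific annotation actually stored, so this restriction never obstructs the forward induction: every pair that is in fact present was introduced by some concrete derivation whose witnesses can be recovered via the inductive hypothesis. Any converse-style claim tying $l_A$ to a minimal or canonical path length would require much more care, but is not needed here.
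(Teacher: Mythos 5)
Your proof is correct and follows essentially the same inductive argument as the paper: the same base case on single edges and the same case split in the inductive step between survival from $a^{(p-1)}$ and introduction via a production $A \rightarrow B C$ with $l_A = l_B + l_C$. Your closing remark on why the at-most-one-annotation rule does not obstruct the purely existential claim is a useful clarification the paper leaves implicit, but it does not change the route.
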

\begin{proof}(Proof by Induction)
	
	\textbf{Basis}: Show that the statement of the lemma holds for $k = 1$. For any $i, j$ and for any non-terminal $A \in N$, $(A, l_A) \in a^{(1)}_{i,j}$ iff $l_A = 1$ and there is $i \pi j$ that consists of a unique edge $e$ from the node $i$ to the node $j$ and $(A \rightarrow x) \in P$ where $x = l(\pi)$. Therefore there is $i \pi j$, such that $A \xrightarrow{*} l(\pi)$ and the length of $\pi$ is equal to $l_A$. Thus, it has been shown that the statement of the lemma holds for $k = 1$.
	
	\textbf{Inductive step}: Assume that the statement of the lemma holds for any $k \leq (p - 1)$ and show that it also holds for $k = p$ where $p \geq 2$. For any $i, j$ and for any non-terminal $A \in N$, $(A, l_A) \in a^{(p)}_{i,j}$ iff $(A, l_A) \in a^{(p-1)}_{i,j}$ or $(A, l_A) \in (a^{(p-1)} \times a^{(p-1)})_{i,j}$ since $a^{(p)} = a^{(p-1)} \cup (a^{(p-1)} \times a^{(p-1)}).$
	
	Let $(A, l_A) \in a^{(p-1)}_{i,j}$. By the inductive hypothesis, there is $i \pi j$, such that $A \xrightarrow{*} l(\pi)$ and the length of $\pi$ is equal to $l_A$. Therefore the statement of the lemma holds for $k = p$.
	
	Let $(A, l_A) \in (a^{(p-1)} \times a^{(p-1)})_{i,j}$. By the definition, $(A, l_A) \in (a^{(p-1)} \times a^{(p-1)})_{i,j}$ iff there are $r$, $(B, l_B) \in a^{(p-1)}_{i,r}$ and $(C, l_C) \in a^{(p-1)}_{r,j}$, such that $(A \rightarrow B C) \in P$ and $l_A = l_B + l_C$. Hence, by the inductive hypothesis, there are $i \pi_1 r$ and $r \pi_2 j$, such that $$(B \xrightarrow{*} l(\pi_1)) \wedge(C \xrightarrow{*} l(\pi_2)),$$ where the length of $\pi_1$ is equal to $l_B$ and the length of $\pi_2$ is equal to $l_C$. Thus, the concatenation of paths $\pi_1$ and $\pi_2$ is $i \pi j$, where $A \xrightarrow{*} l(\pi)$ and the length of $\pi$ is equal to $l_A$. Therefore the statement of the lemma holds for $k = p$ and this completes the proof of the lemma.
\end{proof}

\begin{mytheorem}\label{thm:singlepathcorrect}
	Let $D = (V,E)$ be a graph and let $G =(N,\Sigma,P)$ be a grammar. Then for any $i, j$ and for any non-terminal $A \in N$, if $(A, l_A) \in a^{cf}_{i,j}$, then there is $i \pi j$, such that $A \xrightarrow{*} l(\pi)$ and the length of $\pi$ is equal to $l_A$.
\end{mytheorem}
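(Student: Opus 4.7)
The plan is to derive Theorem~\ref{thm:singlepathcorrect} almost immediately from Lemma~\ref{lemma:singlepath} by unwrapping the definition of $a^{cf}$. Since $a^{cf} = a^{(1)} \cup a^{(2)} \cup \cdots$, membership $(A, l_A) \in a^{cf}_{i,j}$ means exactly that there exists some $k \geq 1$ with $(A, l_A) \in a^{(k)}_{i,j}$. Applying Lemma~\ref{lemma:singlepath} to this $k$ then yields a path $i \pi j$ with $A \xrightarrow{*} l(\pi)$ and $|\pi| = l_A$, which is what the theorem asserts. The proof is therefore only a one-step argument; no induction or case analysis is needed at this stage because all of the structural work has already been handled in the lemma.

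First I would state the expansion $a^{cf}_{i,j} = \bigcup_{k \geq 1} a^{(k)}_{i,j}$ to justify extracting a particular $k$ for which $(A, l_A) \in a^{(k)}_{i,j}$. Next, I would cite Lemma~\ref{lemma:singlepath} applied to this index $k$ in order to obtain the desired path $\pi$ with the correct label derivation and length. The conclusion follows at once; closing the proof requires only a single sentence.

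The one point worth flagging, though it is not strictly needed for this direction of the implication, is the remark made just before Lemma~\ref{lemma:singlepath}: once a pair $(A, l_1)$ has entered $a^{(p)}_{i,j}$, no pair $(A, l_2)$ with $l_2 \neq l_1$ is added later. This guarantees that $l_A$ is unambiguously associated with $A$ at position $(i,j)$ in $a^{cf}$, so the statement "$(A, l_A) \in a^{cf}_{i,j}$" is meaningful as a single-valued assertion rather than referring to an arbitrary label. I would mention this only as a side remark, since the implication "if $(A, l_A) \in a^{cf}_{i,j}$ then a path of length $l_A$ exists" is insensitive to the choice among such pairs.

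The main obstacle here is essentially nil: all the combinatorial content lies in Lemma~\ref{lemma:singlepath}, and this theorem is a clean corollary obtained by absorbing the index $k$ into the union. The only thing to be careful about is the direction of the implication, since the converse (given a path of length $l_A$ with $A \xrightarrow{*} l(\pi)$, that pair appears in $a^{cf}_{i,j}$) is not claimed and would require the analogue of Lemma~\ref{lemma:cf} in the labelled setting; the proof should therefore be explicit that it only travels from matrix entries to paths, not back.
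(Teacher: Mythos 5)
Your proposal is correct and follows exactly the paper's own argument: expand $a^{cf}_{i,j}$ as the union $\bigcup_{k\geq 1} a^{(k)}_{i,j}$, extract the witnessing index $k$, and apply Lemma~\ref{lemma:singlepath} to obtain the path. The additional remark about the uniqueness of the length associated with $A$ at $(i,j)$ is a harmless aside not present in (and not needed by) the paper's proof.
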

\begin{proof}
	
	Since the matrix $a^{cf} = a^{(1)} \cup a^{(2)} \cup \cdots$, for any $i, j$ and for any non-terminal $A \in N$, if $(A, l_A) \in a^{cf}_{i,j}$, then there is $k \geq 1$, such that $A \in a^{(k)}_{i,j}$. By the lemma~\ref{lemma:singlepath}, if $(A, l_A) \in a^{(k)}_{i,j}$, then there is $i \pi j$, such that $A \xrightarrow{*} l(\pi)$ and the length of $\pi$ is equal to $l_A$. This completes the proof of the theorem.
\end{proof}

By the theorem~\ref{thm:correct}, we can determine whether $(i,j) \in R_A$ by asking whether $(A, l_A) \in a^{cf}_{i,j}$ for some $l_A$. By the theorem~\ref{thm:singlepathcorrect}, there is $i \pi j$, such that $A \xrightarrow{*} l(\pi)$ and the length of $\pi$ is equal to $l_A$. Therefore, we can find such a path $\pi$ of the length $l_A$ from the node $i$ to the node $j$ by a simple search. Thus, we show how the context-free path query evaluation using the single-path query semantics can be reduced to the calculation of matrix transitive closure $a^{cf}$. Note that the time complexity of the algorithm for context-free path querying w.r.t. the single-path semantics no longer depends on the Boolean matrix multiplications since we modify the matrix representation and operations on the matrix elements.

\section{Evaluation}
In this paper, we do not estimate the practical value of the algorithm for the context-free path querying w.r.t. the single-path query semantics, since this algorithm depends significant on the implementation of the path searching. To show the practical applicability of the algorithm for context-free path querying w.r.t. the relational query semantics, we implement this algorithm using a variety of optimizations and apply these implementations to the navigation query problem for a dataset of popular ontologies taken from~\cite{RDF}. We also compare the performance of our implementations with existing analogs from~\cite{GLL,RDF}. These analogs use more complex algorithms, while our algorithm uses only simple matrix operations.

Since our algorithm works with graphs, each RDF file from a dataset was converted to an edge-labeled directed graph as follows. For each triple $(o,p,s)$ from an RDF file, we added edges $(o,p,s)$ and $(s,p^{-1},o)$ to the graph. We also constructed synthetic graphs $g_1$, $g_2$ and $g_3$, simply repeating the existing graphs.

All tests were run on a PC with the following characteristics:
\begin{itemize}
    \item OS: Microsoft Windows 10 Pro
    \item System Type: x64-based PC
    \item CPU: Intel(R) Core(TM) i7-4790 CPU @ 3.60GHz, 3601 Mhz, 4 Core(s), 4 Logical Processor(s)
    \item RAM: 16 GB
    \item GPU: NVIDIA GeForce GTX 1070
    \begin{itemize}
        \item CUDA Cores:		1920 
        \item Core clock:		1556 MHz 
        \item Memory data rate:	8008 MHz
        \item Memory interface:	256-bit 
        \item Memory bandwidth:	256.26 GB/s
        \item Dedicated video memory:	8192 MB GDDR5
    \end{itemize}
\end{itemize}

We denote the implementation of the algorithm from a paper~\cite{GLL} as $GLL$. The algorithm presented in this paper is implemented in F\# programming language~\cite{fsharp} and is available on GitHub\footnote{GitHub repository of the YaccConstructor project: \url{https://github.com/YaccConstructor/YaccConstructor}.}. We denote our implementations of the proposed algorithm as follows:
\begin{itemize}
    \item dGPU (dense GPU) --- an implementation using row-major order for general matrix representation and a GPU for matrix operations calculation. For calculations of matrix operations on a GPU, we use a wrapper for the CUBLAS library from the managedCuda\footnote{GitHub repository of the managedCuda library: \url{https://kunzmi.github.io/managedCuda/}.} library.
    \item sCPU (sparse CPU) --- an implementation using CSR format for sparse matrix representation and a CPU for matrix operations calculation. For sparse matrix representation in CSR format, we use the Math.Net Numerics\footnote{The Math.Net Numerics WebSite: \url{https://numerics.mathdotnet.com/}.} package.
    \item sGPU (sparse GPU) --- an implementation using the CSR format for sparse matrix representation and a GPU for matrix operations calculation. For calculations of the matrix operations on a GPU, where matrices represented in a CSR format, we use a wrapper for the CUSPARSE library from the managedCuda library.
\end{itemize}

We omit $dGPU$ performance on graphs $g_1$, $g_2$ and $g_3$ since a dense matrix representation leads to a significant performance degradation with the graph size growth. 

We evaluate two classical \textit{same-generation queries}~\cite{FndDB} which, for example, are applicable in bioinformatics.

\textbf{Query 1} is based on the grammar $G^1_S$ for retrieving concepts on the same layer, where:
\begin{itemize}
    \item The grammar $G^1 = (N^1, \Sigma^1, P^1)$.
    \item The set of non-terminals $N^1 = \{S\}$.
    \item The set of terminals $$\Sigma^1 = \{subClassOf, subClassOf^{-1}, type, type^{-1}\}.$$
    \item The set of production rules $P^1$ is presented in Figure~\ref{ProductionRulesQuery1}.
\end{itemize}

\begin{figure}[h]
   \[
\begin{array}{rccl}
   0: & S & \rightarrow & \text{\textit{subClassOf}}^{-1} \ S \ \text{\textit{subClassOf}} \\ 
   1: & S & \rightarrow & \text{\textit{type}}^{-1} \ S \ \text{\textit{type}} \\ 
   2: & S & \rightarrow & \text{\textit{subClassOf}}^{-1} \ \text{\textit{subClassOf}} \\ 
   3: & S & \rightarrow & \text{\textit{type}}^{-1} \ \text{\textit{type}} \\ 
\end{array}
\]
\caption{Production rules for the query 1 grammar.}
\label{ProductionRulesQuery1}
\end{figure}

\begin{table*}[ht]
\centering
\caption{Evaluation results for Query 1}
\label{tbl1}

\begin{tabular}{ | c | c | c | c | c | c | c |}
\hline
Ontology & \#triples & \#results & GLL(ms) & dGPU(ms) & sCPU(ms) & sGPU(ms) \\
\hline 
\hline
skos        & 252 & 810 & 10 & 56 & 14 & 12\\
generations & 273 & 2164 & 19 & 62 & 20 & 13\\
travel      & 277 & 2499 & 24 & 69 & 22 & 30\\
univ-bench  & 293 & 2540 & 25 & 81 & 25 & 15\\
atom-primitive & 425 & 15454 & 255 & 190 & 92 & 22\\
biomedical-measure-primitive & 459 & 15156 & 261 & 266 & 113 & 20\\
foaf        & 631 & 4118 & 39 & 154 & 48 & 9\\
people-pets & 640 & 9472 & 89 & 392 & 142 & 32\\
funding     & 1086 & 17634 & 212 & 1410 & 447 & 36\\
wine        & 1839 & 66572 & 819 & 2047 & 797 & 54\\
pizza       & 1980 & 56195 & 697 & 1104 & 430 & 24\\
$g_{1}$     & 8688 & 141072 & 1926 & --- & 26957 & 82\\
$g_{2}$     & 14712 & 532576 & 6246 & --- & 46809 & 185\\
$g_{3}$     & 15840 & 449560 & 7014 & --- & 24967 & 127\\
\hline
\end{tabular}

\end{table*}

\begin{table*}[h]
\centering
\caption{Evaluation results for Query 2}
\label{tbl2}

\begin{tabular}{ | c | c | c | c | c | c | c |}
\hline
Ontology & \#triples & \#results & GLL(ms) & dGPU(ms) & sCPU(ms) & sGPU(ms) \\
\hline 
\hline
skos        & 252 & 1 & 1 & 10 & 2 & 1\\
generations & 273 & 0 & 1 & 9 & 2 & 0\\
travel      & 277 & 63 & 1 & 31 & 7 & 10\\
univ-bench  & 293 & 81 & 11 & 55 & 15 & 9\\
atom-primitive & 425 & 122 & 66 & 36 & 9 & 2\\
biomedical-measure-primitive & 459 & 2871 & 45 & 276 & 91 & 24\\
foaf        & 631 & 10 & 2 & 53 & 14 & 3\\
people-pets & 640 & 37 & 3 & 144 & 38 & 6\\
funding     & 1086 & 1158 & 23 & 1246 & 344 & 27\\
wine        & 1839 & 133 & 8 & 722 & 179 & 6\\
pizza       & 1980 & 1262 & 29 & 943 & 258 & 23\\
$g_{1}$     & 8688 & 9264 & 167 & --- & 21115 & 38\\
$g_{2}$     & 14712 & 1064 & 46 & --- & 10874 & 21\\
$g_{3}$     & 15840 & 10096 & 393 & --- & 15736 & 40\\
\hline
\end{tabular}

\end{table*}

The grammar $G^1$ is transformed into an equivalent grammar in normal form, which is necessary for our algorithm. This transformation is the same as in Section~\ref{section_example}. Let $R_S$ be a context-free relation for a start non-terminal in the transformed grammar.

The result of query 1 evaluation is presented in Table~\ref{tbl1}, where \#triples is a number of triples $(o,p,s)$ in an RDF file, and \#results is a number of pairs $(n,m)$ in the context-free relation $R_S$. We can determine whether $(i,j) \in R_S$ by asking whether $S \in a^{cf}_{i,j}$, where $a^{cf}$ is a transitive closure calculated by the proposed algorithm. All implementations in Table~\ref{tbl1} have the same \#results and demonstrate up to 1000 times better performance as compared to the algorithm presented in~\cite{RDF} for $Q_1$. Our implementation $sGPU$ demonstrates a better performance than $GLL$. We also can conclude that acceleration from the $GPU$ increases with the graph size growth.

\textbf{Query 2} is based on the grammar $G^2_S$ for retrieving concepts on the adjacent layers, where:
\begin{itemize}
    \item The grammar $G^2 = (N^2, \Sigma^2, P^2)$.
    \item The set of non-terminals $N^2 = \{S, B\}$.
    \item The set of terminals $$\Sigma^2 = \{subClassOf, subClassOf^{-1}\}.$$
    \item The set of production rules $P^2$ is presented in Figure~\ref{ProductionRulesQuery2}.
\end{itemize}

\begin{figure}[h]
   \[
\begin{array}{rccl}
   0: & S & \rightarrow & B \ \text{\textit{subClassOf}} \\ 
   1: & S & \rightarrow & \text{\textit{subClassOf}} \\ 
   2: & B & \rightarrow & \text{\textit{subClassOf}}^{-1} \ B \ \text{\textit{subClassOf}} \\ 
   3: & B & \rightarrow & \text{\textit{subClassOf}}^{-1} \ \text{\textit{subClassOf}} \\ 
\end{array}
\]
\caption{Production rules for the query 2 grammar.}
\label{ProductionRulesQuery2}
\end{figure}

The grammar $G^2$ is transformed into an equivalent grammar in normal form. Let $R_S$ be a context-free relation for a start non-terminal in the transformed grammar.

The result of the query 2 evaluation is presented in Table~\ref{tbl2}. All implementations in Table~\ref{tbl2} have the same \#results. On almost all graphs $sGPU$ demonstrates a better performance than $GLL$ implementation and we also can conclude that acceleration from the $GPU$ increases with the graph size growth.

As a result, we conclude that our algorithm can be applied to some real-world problems and it allows us to speed up computations by means of GPGPU.

\section{Conclusion and future work}
In this paper, we have shown how the context-free path query evaluation w.r.t. the relational and the single-path query semantics can be reduced to the calculation of matrix transitive closure. Also, we provided a formal proof of the correctness of the proposed reduction. In addition, we introduced an algorithm for computing this transitive closure, which allows us to efficiently apply GPGPU computing techniques. Finally, we have shown the practical applicability of the proposed algorithm by running different implementations of our algorithm on real-world data.

We can identify several open problems for further research. In this paper, we have considered only two semantics of context-free path querying but there are other important semantics, such as all-path query semantics~\cite{hellingsPathQuerying} which requires presenting all paths for all triples $(A,m,n)$. Context-free path querying implemented with the algorithm~\cite{GLL} can answer the queries in the all-path query semantics by constructing a parse forest. It is possible to construct a parse forest for a linear input by matrix multiplication~\cite{okhotin_cyk}. Whether it is possible to generalize this approach for a graph input is an open question.

In our algorithm, we calculate the matrix transitive closure naively, but there are algorithms for the transitive closure calculation, which are asymptotically more efficient. Therefore, the question is whether it is possible to apply these algorithms for the matrix transitive closure calculation to the problem of context-free path querying.

Also, there are conjunctive~\cite{okhotinConjAndBool} and Boolean grammars~\cite{okhotinBoolean}, which have more expressive power than context-free grammars. Conjunctive language and Boolean path querying problems are undecidable~\cite{hellingsRelational} but our algorithm can be trivially generalized to work on this grammars because parsing with conjunctive and Boolean grammars can be expressed by matrix multiplication~\cite{okhotin_cyk}. It is not clear what a result of our algorithm applied to this grammars would look like. Our hypothesis is that it would produce the upper approximation of a solution. Also, path querying problem w.r.t. the conjunctive grammars can be applied to static code analysis~\cite{zhang2017context}.

From a practical point of view, matrix multiplication in the main loop of the proposed algorithm may be performed on different GPGPU independently. It can help to utilize the power of multi-GPU systems and increase the performance of the context-free path querying.

There is an algorithm~\cite{apspGPU} for transitive closure calculation on directed graphs which generalized to handle graph sizes inherently larger than the DRAM memory available on the GPU. Therefore, the question is whether it is possible to apply this approach to the matrix transitive closure calculation in the problem of context-free path querying.

\section*{Acknowledgments}

We are grateful to Dmitri Boulytchev, Ekaterina Verbitskaia, Marina Polubelova, Dmitrii Kosarev and Dmitry Koznov for their careful reading, pointing out some mistakes, and invaluable suggestions.
This work is supported by grant from JetBrains Research.

\bibliographystyle{ACM-Reference-Format}
\bibliography{graphparsing}

%%% -*-BibTeX-*-
%%% Do NOT edit. File created by BibTeX with style
%%% ACM-Reference-Format-Journals [18-Jan-2012].

\begin{thebibliography}{00}

%%% ====================================================================
%%% NOTE TO THE USER: you can override these defaults by providing
%%% customized versions of any of these macros before the \bibliography
%%% command.  Each of them MUST provide its own final punctuation,
%%% except for \shownote{}, \showDOI{}, and \showURL{}.  The latter two
%%% do not use final punctuation, in order to avoid confusing it with
%%% the Web address.
%%%
%%% To suppress output of a particular field, define its macro to expand
%%% to an empty string, or better, \unskip, like this:
%%%
%%% \newcommand{\showDOI}[1]{\unskip}   % LaTeX syntax
%%%
%%% \def \showDOI #1{\unskip}           % plain TeX syntax
%%%
%%% ====================================================================

\ifx \showCODEN    \undefined \def \showCODEN     #1{\unskip}     \fi
\ifx \showDOI      \undefined \def \showDOI       #1{#1}\fi
\ifx \showISBNx    \undefined \def \showISBNx     #1{\unskip}     \fi
\ifx \showISBNxiii \undefined \def \showISBNxiii  #1{\unskip}     \fi
\ifx \showISSN     \undefined \def \showISSN      #1{\unskip}     \fi
\ifx \showLCCN     \undefined \def \showLCCN      #1{\unskip}     \fi
\ifx \shownote     \undefined \def \shownote      #1{#1}          \fi
\ifx \showarticletitle \undefined \def \showarticletitle #1{#1}   \fi
\ifx \showURL      \undefined \def \showURL       {\relax}        \fi
% The following commands are used for tagged output and should be
% invisible to TeX
\providecommand\bibfield[2]{#2}
\providecommand\bibinfo[2]{#2}
\providecommand\natexlab[1]{#1}
\providecommand\showeprint[2][]{arXiv:#2}

\bibitem[\protect\citeauthoryear{Abiteboul, Hull, and Vianu}{Abiteboul
  et~al\mbox{.}}{1995}]%
        {FndDB}
\bibfield{author}{\bibinfo{person}{Serge Abiteboul}, \bibinfo{person}{Richard
  Hull}, {and} \bibinfo{person}{Victor Vianu}.}
  \bibinfo{year}{1995}\natexlab{}.
\newblock \bibinfo{booktitle}{{\em Foundations of databases: the logical
  level}}.
\newblock \bibinfo{publisher}{Addison-Wesley Longman Publishing Co., Inc.}
\newblock


\bibitem[\protect\citeauthoryear{Abiteboul and Vianu}{Abiteboul and
  Vianu}{1997}]%
        {abiteboul1997regular}
\bibfield{author}{\bibinfo{person}{Serge Abiteboul} {and}
  \bibinfo{person}{Victor Vianu}.} \bibinfo{year}{1997}\natexlab{}.
\newblock \showarticletitle{Regular path queries with constraints}. In
  \bibinfo{booktitle}{{\em Proceedings of the sixteenth ACM
  SIGACT-SIGMOD-SIGART symposium on Principles of database systems}}. ACM,
  \bibinfo{pages}{122--133}.
\newblock


\bibitem[\protect\citeauthoryear{Anderson, Nov{\'a}k, S{\"u}k{\"o}sd, Golden,
  Arunapuram, Edvardsson, and Hein}{Anderson et~al\mbox{.}}{2013}]%
        {Bio}
\bibfield{author}{\bibinfo{person}{James~WJ Anderson},
  \bibinfo{person}{{\'A}d{\'a}m Nov{\'a}k}, \bibinfo{person}{Zsuzsanna
  S{\"u}k{\"o}sd}, \bibinfo{person}{Michael Golden}, \bibinfo{person}{Preeti
  Arunapuram}, \bibinfo{person}{Ingolfur Edvardsson}, {and}
  \bibinfo{person}{Jotun Hein}.} \bibinfo{year}{2013}\natexlab{}.
\newblock \showarticletitle{Quantifying variances in comparative RNA secondary
  structure prediction}.
\newblock \bibinfo{journal}{{\em BMC bioinformatics\/}} \bibinfo{volume}{14},
  \bibinfo{number}{1} (\bibinfo{year}{2013}), \bibinfo{pages}{149}.
\newblock


\bibitem[\protect\citeauthoryear{Barrett, Jacob, and Marathe}{Barrett
  et~al\mbox{.}}{2000}]%
        {barrett2000formal}
\bibfield{author}{\bibinfo{person}{Chris Barrett}, \bibinfo{person}{Riko
  Jacob}, {and} \bibinfo{person}{Madhav Marathe}.}
  \bibinfo{year}{2000}\natexlab{}.
\newblock \showarticletitle{Formal-language-constrained path problems}.
\newblock \bibinfo{journal}{{\it SIAM J. Comput.}} \bibinfo{volume}{30},
  \bibinfo{number}{3} (\bibinfo{year}{2000}), \bibinfo{pages}{809--837}.
\newblock


\bibitem[\protect\citeauthoryear{Bastani, Anand, and Aiken}{Bastani
  et~al\mbox{.}}{2015}]%
        {bastani2015specification}
\bibfield{author}{\bibinfo{person}{Osbert Bastani}, \bibinfo{person}{Saswat
  Anand}, {and} \bibinfo{person}{Alex Aiken}.} \bibinfo{year}{2015}\natexlab{}.
\newblock \showarticletitle{Specification inference using context-free language
  reachability}. In \bibinfo{booktitle}{{\em ACM SIGPLAN Notices}},
  Vol.~\bibinfo{volume}{50}. ACM, \bibinfo{pages}{553--566}.
\newblock


\bibitem[\protect\citeauthoryear{Che, Beckmann, and Reinhardt}{Che
  et~al\mbox{.}}{2016}]%
        {matricesOnGPGPU}
\bibfield{author}{\bibinfo{person}{Shuai Che}, \bibinfo{person}{Bradford~M
  Beckmann}, {and} \bibinfo{person}{Steven~K Reinhardt}.}
  \bibinfo{year}{2016}\natexlab{}.
\newblock \showarticletitle{Programming GPGPU Graph Applications with Linear
  Algebra Building Blocks}.
\newblock \bibinfo{journal}{{\em International Journal of Parallel
  Programming\/}} (\bibinfo{year}{2016}), \bibinfo{pages}{1--23}.
\newblock


\bibitem[\protect\citeauthoryear{Chomsky}{Chomsky}{1959}]%
        {chomsky}
\bibfield{author}{\bibinfo{person}{Noam Chomsky}.}
  \bibinfo{year}{1959}\natexlab{}.
\newblock \showarticletitle{On certain formal properties of grammars}.
\newblock \bibinfo{journal}{{\em Information and control\/}}
  \bibinfo{volume}{2}, \bibinfo{number}{2} (\bibinfo{year}{1959}),
  \bibinfo{pages}{137--167}.
\newblock


\bibitem[\protect\citeauthoryear{Fan, Li, Ma, Tang, and Wu}{Fan
  et~al\mbox{.}}{2011}]%
        {fan2011adding}
\bibfield{author}{\bibinfo{person}{Wenfei Fan}, \bibinfo{person}{Jianzhong Li},
  \bibinfo{person}{Shuai Ma}, \bibinfo{person}{Nan Tang}, {and}
  \bibinfo{person}{Yinghui Wu}.} \bibinfo{year}{2011}\natexlab{}.
\newblock \showarticletitle{Adding regular expressions to graph reachability
  and pattern queries}. In \bibinfo{booktitle}{{\em Data Engineering (ICDE),
  2011 IEEE 27th International Conference on}}. IEEE, \bibinfo{pages}{39--50}.
\newblock


\bibitem[\protect\citeauthoryear{Grigorev and Ragozina}{Grigorev and
  Ragozina}{2016}]%
        {GLL}
\bibfield{author}{\bibinfo{person}{Semyon Grigorev} {and}
  \bibinfo{person}{Anastasiya Ragozina}.} \bibinfo{year}{2016}\natexlab{}.
\newblock \showarticletitle{Context-Free Path Querying with Structural
  Representation of Result}.
\newblock \bibinfo{journal}{{\em arXiv preprint arXiv:1612.08872\/}}
  (\bibinfo{year}{2016}).
\newblock


\bibitem[\protect\citeauthoryear{Grune and Jacobs}{Grune and Jacobs}{2006}]%
        {Grune}
\bibfield{author}{\bibinfo{person}{Dick Grune} {and} \bibinfo{person}{Ceriel
  J.~H. Jacobs}.} \bibinfo{year}{2006}\natexlab{}.
\newblock \bibinfo{booktitle}{{\em Parsing Techniques (Monographs in Computer
  Science)}}.
\newblock \bibinfo{publisher}{Springer-Verlag New York, Inc.},
  \bibinfo{address}{Secaucus, NJ, USA}.
\newblock
\showISBNx{038720248X}


\bibitem[\protect\citeauthoryear{Hellings}{Hellings}{2014}]%
        {hellingsRelational}
\bibfield{author}{\bibinfo{person}{J. Hellings}.}
  \bibinfo{year}{2014}\natexlab{}.
\newblock \showarticletitle{Conjunctive context-free path queries}.
\newblock  (\bibinfo{year}{2014}).
\newblock


\bibitem[\protect\citeauthoryear{Hellings}{Hellings}{2015}]%
        {hellingsPathQuerying}
\bibfield{author}{\bibinfo{person}{Jelle Hellings}.}
  \bibinfo{year}{2015}\natexlab{}.
\newblock \showarticletitle{Querying for Paths in Graphs using Context-Free
  Path Queries}.
\newblock \bibinfo{journal}{{\em arXiv preprint arXiv:1502.02242\/}}
  (\bibinfo{year}{2015}).
\newblock


\bibitem[\protect\citeauthoryear{Kasami}{Kasami}{1965}]%
        {kasami}
\bibfield{author}{\bibinfo{person}{Tadao Kasami}.}
  \bibinfo{year}{1965}\natexlab{}.
\newblock \bibinfo{booktitle}{{\em AN EFFICIENT RECOGNITION AND SYNTAXANALYSIS
  ALGORITHM FOR CONTEXT-FREE LANGUAGES.}}
\newblock \bibinfo{type}{{T}echnical {R}eport}. \bibinfo{institution}{DTIC
  Document}.
\newblock


\bibitem[\protect\citeauthoryear{Katz and Kider~Jr}{Katz and Kider~Jr}{2008}]%
        {apspGPU}
\bibfield{author}{\bibinfo{person}{Gary~J Katz} {and} \bibinfo{person}{Joseph~T
  Kider~Jr}.} \bibinfo{year}{2008}\natexlab{}.
\newblock \showarticletitle{All-pairs shortest-paths for large graphs on the
  GPU}. In \bibinfo{booktitle}{{\em Proceedings of the 23rd ACM
  SIGGRAPH/EUROGRAPHICS symposium on Graphics hardware}}. Eurographics
  Association, \bibinfo{pages}{47--55}.
\newblock


\bibitem[\protect\citeauthoryear{Mendelzon and Wood}{Mendelzon and
  Wood}{1995}]%
        {graphDB}
\bibfield{author}{\bibinfo{person}{A. Mendelzon} {and} \bibinfo{person}{P.
  Wood}.} \bibinfo{year}{1995}\natexlab{}.
\newblock \showarticletitle{Finding Regular Simple Paths in Graph Databases}.
\newblock \bibinfo{journal}{{\em SIAM J. Computing\/}} \bibinfo{volume}{24},
  \bibinfo{number}{6} (\bibinfo{year}{1995}), \bibinfo{pages}{1235--1258}.
\newblock


\bibitem[\protect\citeauthoryear{Nol{\'e} and Sartiani}{Nol{\'e} and
  Sartiani}{2016}]%
        {nole2016regular}
\bibfield{author}{\bibinfo{person}{Maurizio Nol{\'e}} {and}
  \bibinfo{person}{Carlo Sartiani}.} \bibinfo{year}{2016}\natexlab{}.
\newblock \showarticletitle{Regular path queries on massive graphs}. In
  \bibinfo{booktitle}{{\em Proceedings of the 28th International Conference on
  Scientific and Statistical Database Management}}. ACM, \bibinfo{pages}{13}.
\newblock


\bibitem[\protect\citeauthoryear{Okhotin}{Okhotin}{2004}]%
        {okhotinBoolean}
\bibfield{author}{\bibinfo{person}{Alexander Okhotin}.}
  \bibinfo{year}{2004}\natexlab{}.
\newblock \showarticletitle{Boolean grammars}.
\newblock \bibinfo{journal}{{\em Information and Computation\/}}
  \bibinfo{volume}{194}, \bibinfo{number}{1} (\bibinfo{year}{2004}),
  \bibinfo{pages}{19--48}.
\newblock


\bibitem[\protect\citeauthoryear{Okhotin}{Okhotin}{2013}]%
        {okhotinConjAndBool}
\bibfield{author}{\bibinfo{person}{Alexander Okhotin}.}
  \bibinfo{year}{2013}\natexlab{}.
\newblock \showarticletitle{Conjunctive and Boolean grammars: the true general
  case of the context-free grammars}.
\newblock \bibinfo{journal}{{\em Computer Science Review\/}}
  \bibinfo{volume}{9} (\bibinfo{year}{2013}), \bibinfo{pages}{27--59}.
\newblock


\bibitem[\protect\citeauthoryear{Okhotin}{Okhotin}{2014}]%
        {okhotin_cyk}
\bibfield{author}{\bibinfo{person}{Alexander Okhotin}.}
  \bibinfo{year}{2014}\natexlab{}.
\newblock \showarticletitle{Parsing by matrix multiplication generalized to
  Boolean grammars}.
\newblock \bibinfo{journal}{{\em Theoretical Computer Science\/}}
  \bibinfo{volume}{516} (\bibinfo{year}{2014}), \bibinfo{pages}{101--120}.
\newblock


\bibitem[\protect\citeauthoryear{Reps}{Reps}{1998}]%
        {reps1998program}
\bibfield{author}{\bibinfo{person}{Thomas Reps}.}
  \bibinfo{year}{1998}\natexlab{}.
\newblock \showarticletitle{Program analysis via graph reachability}.
\newblock \bibinfo{journal}{{\em Information and software technology\/}}
  \bibinfo{volume}{40}, \bibinfo{number}{11} (\bibinfo{year}{1998}),
  \bibinfo{pages}{701--726}.
\newblock


\bibitem[\protect\citeauthoryear{Reutter, Romero, and Vardi}{Reutter
  et~al\mbox{.}}{2017}]%
        {reutter2017regular}
\bibfield{author}{\bibinfo{person}{Juan~L Reutter}, \bibinfo{person}{Miguel
  Romero}, {and} \bibinfo{person}{Moshe~Y Vardi}.}
  \bibinfo{year}{2017}\natexlab{}.
\newblock \showarticletitle{Regular queries on graph databases}.
\newblock \bibinfo{journal}{{\em Theory of Computing Systems\/}}
  \bibinfo{volume}{61}, \bibinfo{number}{1} (\bibinfo{year}{2017}),
  \bibinfo{pages}{31--83}.
\newblock


\bibitem[\protect\citeauthoryear{Scott and Johnstone}{Scott and
  Johnstone}{2010}]%
        {scott2010gll}
\bibfield{author}{\bibinfo{person}{Elizabeth Scott} {and}
  \bibinfo{person}{Adrian Johnstone}.} \bibinfo{year}{2010}\natexlab{}.
\newblock \showarticletitle{GLL parsing}.
\newblock \bibinfo{journal}{{\em Electronic Notes in Theoretical Computer
  Science\/}} \bibinfo{volume}{253}, \bibinfo{number}{7}
  (\bibinfo{year}{2010}), \bibinfo{pages}{177--189}.
\newblock


\bibitem[\protect\citeauthoryear{Sevon and Eronen}{Sevon and Eronen}{2008}]%
        {GraphQueryWithEarley}
\bibfield{author}{\bibinfo{person}{Petteri Sevon} {and} \bibinfo{person}{Lauri
  Eronen}.} \bibinfo{year}{2008}\natexlab{}.
\newblock \showarticletitle{Subgraph queries by context-free grammars}.
\newblock \bibinfo{journal}{{\em Journal of Integrative Bioinformatics\/}}
  \bibinfo{volume}{5}, \bibinfo{number}{2} (\bibinfo{year}{2008}),
  \bibinfo{pages}{100}.
\newblock


\bibitem[\protect\citeauthoryear{Syme, Granicz, and Cisternino}{Syme
  et~al\mbox{.}}{2012}]%
        {fsharp}
\bibfield{author}{\bibinfo{person}{Don Syme}, \bibinfo{person}{Adam Granicz},
  {and} \bibinfo{person}{Antonio Cisternino}.} \bibinfo{year}{2012}\natexlab{}.
\newblock \bibinfo{booktitle}{{\em Expert F\# 3.0}}.
\newblock \bibinfo{publisher}{Springer}.
\newblock


\bibitem[\protect\citeauthoryear{Valiant}{Valiant}{1975}]%
        {valiant}
\bibfield{author}{\bibinfo{person}{Leslie~G Valiant}.}
  \bibinfo{year}{1975}\natexlab{}.
\newblock \showarticletitle{General context-free recognition in less than cubic
  time}.
\newblock \bibinfo{journal}{{\em Journal of computer and system sciences\/}}
  \bibinfo{volume}{10}, \bibinfo{number}{2} (\bibinfo{year}{1975}),
  \bibinfo{pages}{308--315}.
\newblock


\bibitem[\protect\citeauthoryear{Xu, Rountev, and Sridharan}{Xu
  et~al\mbox{.}}{2009}]%
        {xu2009scaling}
\bibfield{author}{\bibinfo{person}{Guoqing Xu}, \bibinfo{person}{Atanas
  Rountev}, {and} \bibinfo{person}{Manu Sridharan}.}
  \bibinfo{year}{2009}\natexlab{}.
\newblock \showarticletitle{Scaling CFL-reachability-based points-to analysis
  using context-sensitive must-not-alias analysis}. In \bibinfo{booktitle}{{\em
  ECOOP}}, Vol.~\bibinfo{volume}{9}. Springer, \bibinfo{pages}{98--122}.
\newblock


\bibitem[\protect\citeauthoryear{Yannakakis}{Yannakakis}{1990}]%
        {transitive-closure}
\bibfield{author}{\bibinfo{person}{Mihalis Yannakakis}.}
  \bibinfo{year}{1990}\natexlab{}.
\newblock \showarticletitle{Graph-theoretic methods in database theory}. In
  \bibinfo{booktitle}{{\em Proceedings of the ninth ACM SIGACT-SIGMOD-SIGART
  symposium on Principles of database systems}}. ACM,
  \bibinfo{pages}{230--242}.
\newblock


\bibitem[\protect\citeauthoryear{Younger}{Younger}{1967}]%
        {younger}
\bibfield{author}{\bibinfo{person}{Daniel~H Younger}.}
  \bibinfo{year}{1967}\natexlab{}.
\newblock \showarticletitle{Recognition and parsing of context-free languages
  in time n3}.
\newblock \bibinfo{journal}{{\em Information and control\/}}
  \bibinfo{volume}{10}, \bibinfo{number}{2} (\bibinfo{year}{1967}),
  \bibinfo{pages}{189--208}.
\newblock


\bibitem[\protect\citeauthoryear{Zhang and Su}{Zhang and Su}{2017}]%
        {zhang2017context}
\bibfield{author}{\bibinfo{person}{Qirun Zhang} {and} \bibinfo{person}{Zhendong
  Su}.} \bibinfo{year}{2017}\natexlab{}.
\newblock \showarticletitle{Context-sensitive data-dependence analysis via
  linear conjunctive language reachability}. In \bibinfo{booktitle}{{\em
  Proceedings of the 44th ACM SIGPLAN Symposium on Principles of Programming
  Languages}}. ACM, \bibinfo{pages}{344--358}.
\newblock


\bibitem[\protect\citeauthoryear{Zhang, Feng, Wang, Rao, and Wu}{Zhang
  et~al\mbox{.}}{2016}]%
        {RDF}
\bibfield{author}{\bibinfo{person}{X. Zhang}, \bibinfo{person}{Z. Feng},
  \bibinfo{person}{X. Wang}, \bibinfo{person}{G. Rao}, {and}
  \bibinfo{person}{W. Wu}.} \bibinfo{year}{2016}\natexlab{}.
\newblock \showarticletitle{Context-free path queries on RDF graphs}. In
  \bibinfo{booktitle}{{\em International Semantic Web Conference}}. Springer,
  \bibinfo{pages}{632--648}.
\newblock


\end{thebibliography}

\end{document}